\newtheorem{defi}{\bf D\scriptsize EFINITION \normalsize}
\newtheorem{theorem}{\bf T\scriptsize HEOREM \normalsize}
\newtheorem{lm}{\bf L\scriptsize EMMA \normalsize}
\newtheorem{dk}{\bf C\scriptsize OROLLARY \normalsize}
\newtheorem{rem}{\bf R\scriptsize EMARK \normalsize}
\newtheorem{exa}{\bf E\scriptsize XAMPLE \normalsize}
\newtheorem{pro}{\bf P\scriptsize ROBLEM \normalsize}
\newtheorem{prop}{\bf P\scriptsize ROPOSITION \normalsize}
\newtheorem{no}{\bf N\scriptsize OTE \normalsize}
\newenvironment{remark}{\begin{rem}\rm}{\end{rem}}
\def\kopr{\hfill\raisebox{3pt}{\framebox{$\star$}}}
\newenvironment{example}{\begin{exa}\rm}{$\kopr$\end{exa}}
\newenvironment{corollary}{\begin{dk}\it}{\end{dk}}
\newcommand{\inte}[2]{\int \limits_{#1}^{#2}}
\newcommand{\zav}[1]{\left( #1 \right)}
\newcommand{\szav}[1]{\left\{ #1 \right\}}
\newcommand{\abs}[1]{\left| #1 \right|}
\newcommand{\dd}{{\rm d}}
\newcommand{\ii}{{\rm i}}
\begin{document}
\title{Pedal coordinates and free double linkage}
\author{Petr Blaschke$^1$, Filip Blaschke$^{2,3}$, Martin Blaschke$^3$}
\email{Petr.Blaschke@math.slu.cz, Filip.Blaschke@physics.slu.cz, Martin.Blaschke@physics.slu.cz}

%\author[2]{Filip Blaschke$^2$}
%\email{Filip.Blaschke@physics.slu.cz}
%\address[2]{ Institute for Experimental and Applied Physics, Czech Technical university in Prague, Husova 240/5, 110~00 Prague 1, Czech Republic.}
%\author{Martin Blaschke$^3$}
%\email{Martin.Blaschke@physics.slu.cz}
\address{ $^1:$ Mathematical Institute, Silesian University in Opava, Na Rybn\'i\v cku 1, 746 01 Opava, Czech Republic. \\
$^2:$ Institute for Experimental and Applied Physics, Czech Technical university in Prague, Husova 240/5, 110~00 Prague 1, Czech Republic.\\ 
$^3:$ Research Centre for Theoretical Physics and Astrophysics, Institute of Physics, Silesian University in Opava, Bezru\v covo n\'am\v est\'i 1150/13, 746 01 Opava, Czech Republic.}
%\address{ Research Centre for Theoretical Physics and Astrophysics, Institute of Physics, Silesian University i n Opava, Bezrucovo nam. 1150/13, 746 01 Opava, Czech Republic.}

%\thanks{P. Blaschke was supported by GA\v CR grant no. 21-27941S and RVO funding 47813059.}
%\address{ Mathematical Institute, Silesian University in Opava, Na Rybnicku 1, 746 01 Opava, Czech Republic}
\thanks{To appear  in
Journal of Geometry and Physics,
2021,
104397,
ISSN 0393-0440,
\url{https://doi.org/10.1016/j.geomphys.2021.104397}.}

\begin{abstract} 
Using the technique of pedal coordinates we investigate the orbits of a free double linkage. We provide a geometrical construction for them and also show a surprising connection between this mechanical system and orbits around a Black Hole and solutions of Dark Kepler problem. 
\end{abstract}
\maketitle
\section{Introduction}
\textit{Pedal coordinates} are a largely forgotten topic in the geometry of planar curves. There is some activity in the area \cite{Pedalref1}, \cite{Pedalref2}, \cite{Pedalref3}, \cite{Pedalref4}, \cite{Pedalref5}, \cite{Pedalref6} but not as much as it deserves. As proved in \cite{Blaschke6}, pedal coordinates are particularly well suited for studying force problems within a plane. Specifically, certain class of central and Lorentz-like force problems can be easily solved in pedal coordinates algebraically (see Theorem \ref{Bl1}).  This result was further generalized for a larger set of forces in \cite{Blaschke9}. In addition, pedal coordinates  offer not only a solution but also an understanding.  They can be used for a geometrical construction of the solution as well as provide a link between seemingly unrelated problems.

The purpose of this paper is to showcase these qualities on a non-trivial example that is outside the scope of papers \cite{Blaschke6,Blaschke9} -- demonstrating that pedal coordinates are far more versatile. Though the example itself is not without interest, the emphasis is laid on the solution's method. Mainly on the ability of the method to classify the resulting trajectories.    

The problem we choose is that of movement of a \textit{free double linkage} -- i.e. a mechanical system with three point-masses, two of which are linked to the central node by massless rigid rods. One can also think of this as a double pendulum but unattached. No outside force is acting on the linkage, hence it is ``free''. Figure \ref{Obr1} illustrates the setup.  

\begin{figure}[h] 
\caption{Movement of a free double linkage corresponding to a case of equal masses and ratio of lengths 2:1. Left: Absolute frame. Right: Center of mass frame.}
\centering %left bottom right top
\includegraphics[scale=0.3,trim=1cm 1cm 1cm 3.5cm,clip]{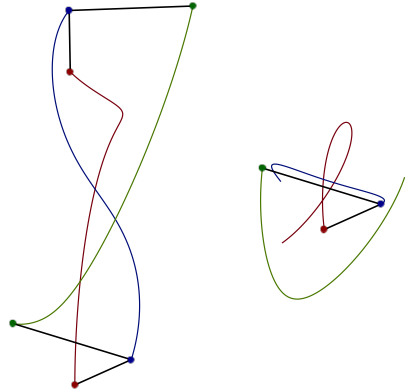}
\label{Obr1}
\end{figure}

We will consider only the planar case -- i.e. when all three points, as well as their velocity vectors, lie in a common plane. The main object of our study is the orbit of an outside node in the center of mass (CM) reference frame.

The double linkage can produce a surprisingly rich family of curves -- as demonstrated by a small gallery of cases in Figure \ref{galerydlinkage}.

\begin{figure}[h]
\caption{A gallery of trajectories of free double linkages as viewed from CM frame.  All the points have equal masses. Top row: 1:1 ratio of lengths. Bottom row: 2:1 ratio.}
\centering %left bottom right top
\includegraphics[scale=0.75,trim=1cm 10cm 1cm 3.5cm,clip]{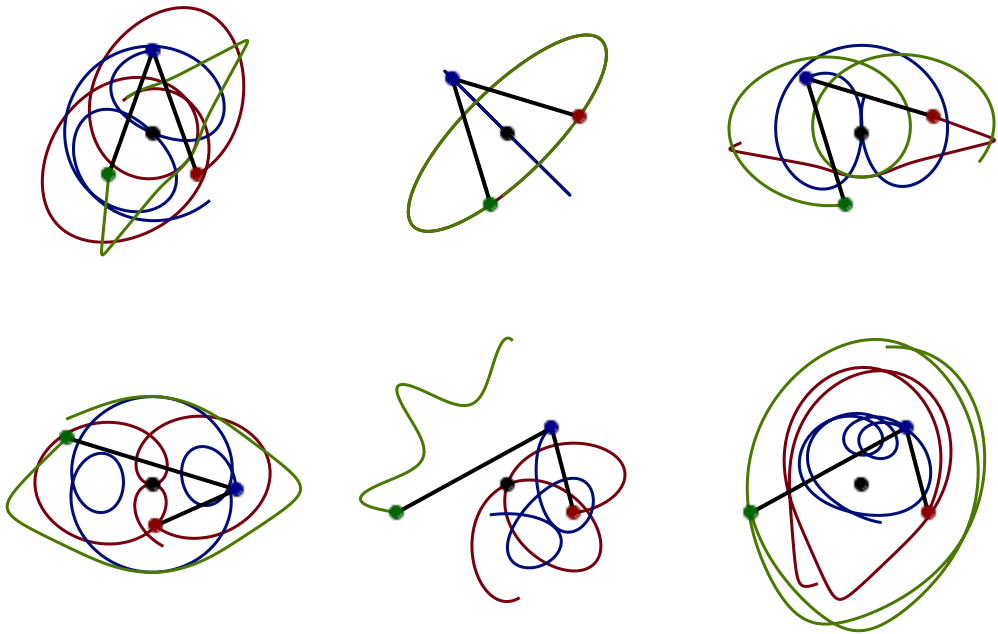}
\label{galerydlinkage}
\end{figure}

We provide pedal equation for a general case of arbitrary masses and arbitrary lengths of the rods in Theorem \ref{Th1}. In Corollary \ref{Cor3} we also show that in the special case of zero total angular momentum, the orbit of an outside node can be fully described by an interesting class of curves, which we call \textit{ellipses in generalized rotational frame of reference} (GRFR) (Figure \ref{EGFRF}).

\begin{figure}[h] 
\caption{Ellipse in a generalized rotation frame of reference}
\centering %left bottom right top
\includegraphics[scale=0.25,trim=5cm 1cm 10cm 1.5cm,clip]{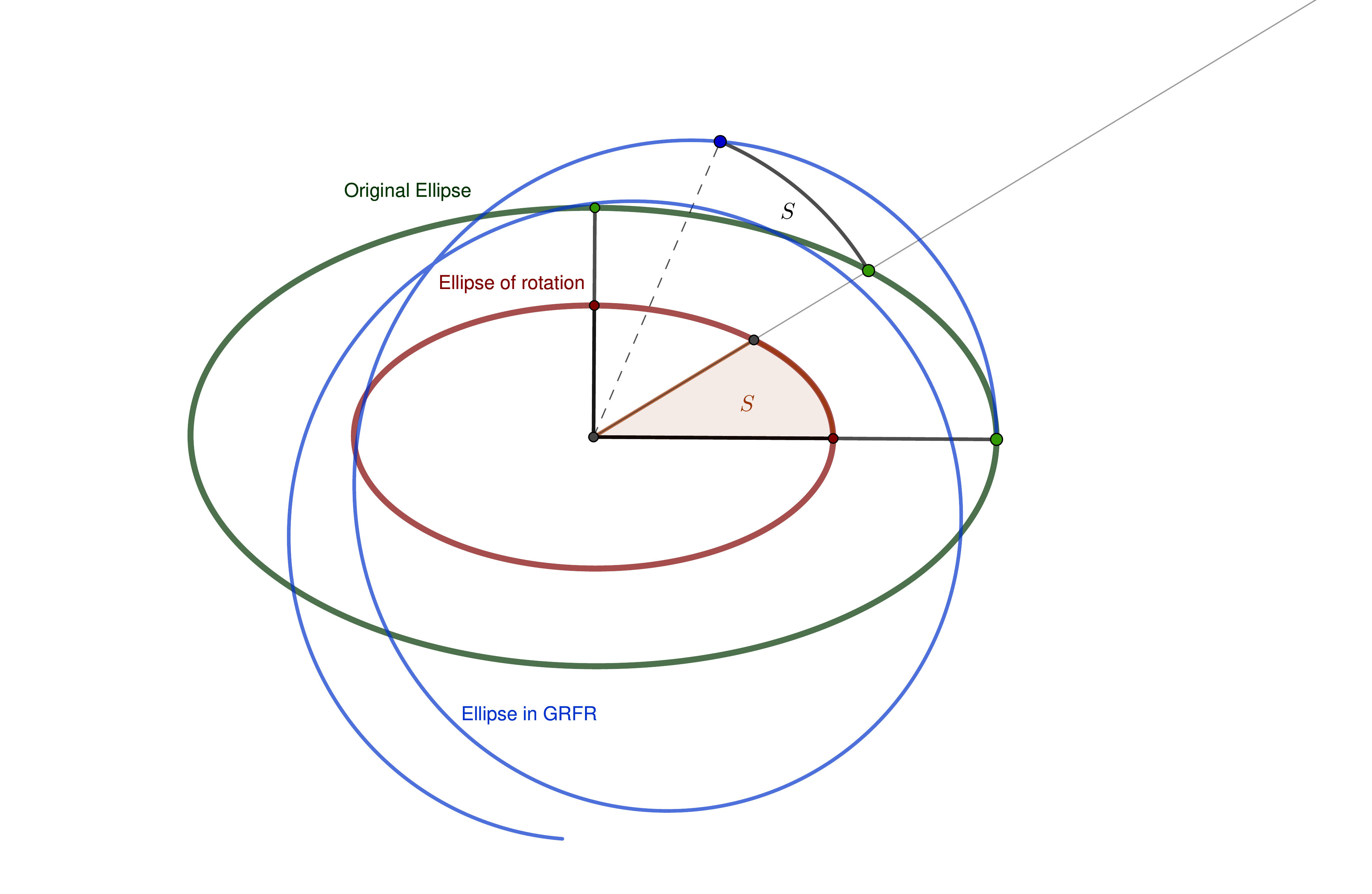}
\label{EGFRF}
\end{figure}

An (ordinary) ellipse in a rotational frame of reference (RFR) is a curve produced just by rotating every point on an ellipse by the amount of area $S$ swept by the radial vector (emanating from the center) up to that point. These curves are exactly what we get when solving Hooke's law, i.e. a dynamical system of the form 
$$
{\bf \ddot x}=-\omega {\bf x},
$$ 
in RFR. Hence the name. Remember that the trajectory of any central force problem (which Hooke's law is an example of) is parameterized so that equal areas are swept in equal time. Ellipse in GRFR is basically the same curve but with the area computed on a different co-axial ellipse (see Figure \ref{EGFRF}).

Using pedal coordinates we are also able to show a link between trajectories of a free double linkage and orbits of a test particle inside a Schwarzschild Black hole (see Section \ref{Sec3}). We have also found that the free double linkage provides a particular solution to the ``Dark Kepler problem'' (discussed in \cite{Blaschke6}). Although certainly just a mathematical coincidence, it is amazing that a connection between such dramatically different problems exists at all. This is perhaps pointing towards certain conservatism of Nature in choosing curves for its orbits.

The paper is structured as follows:  In Section \ref{Sec1}, we derive the equation of motions, conservation laws and prove Theorem \ref{Th1}. In Section \ref{Pedsec}, we give a light introduction to pedal coordinates. Finally, in Section \ref{Sec3}, we prove Corollary \ref{Cor3}.

\newpage
\section{Equation of motion and conservation laws} \label{Sec1}

Consider the dynamical system of the free double linkage, i.e. three point masses  ${\bf x},{\bf y},{\bf z}\in\mathbb{R}^3$, two of which are connected by massless rods: 
\begin{center}       

\begin{tikzpicture}[x=0.75pt,y=0.75pt,yscale=-1,xscale=1]
%uncomment if require: \path (0,300); %set diagram left start at 0, and has height of 300

%Shape: Circle [id:dp377243456618394] 
\draw  [fill={rgb, 255:red, 0; green, 0; blue, 0 }  ,fill opacity=1 ] (112.5,134.75) .. controls (112.5,130.75) and (115.75,127.5) .. (119.75,127.5) .. controls (123.75,127.5) and (127,130.75) .. (127,134.75) .. controls (127,138.75) and (123.75,142) .. (119.75,142) .. controls (115.75,142) and (112.5,138.75) .. (112.5,134.75) -- cycle ;
%Straight Lines [id:da8110886718553199] 
\draw    (119.75,134.75) -- (257.5,89) ;

%Straight Lines [id:da6127079767504096] 
\draw    (257.5,89) -- (420.5,201) ;

%Shape: Circle [id:dp1335759576219233] 
\draw  [fill={rgb, 255:red, 0; green, 0; blue, 0 }  ,fill opacity=1 ] (251.13,89) .. controls (251.13,85.48) and (253.98,82.63) .. (257.5,82.63) .. controls (261.02,82.63) and (263.88,85.48) .. (263.88,89) .. controls (263.88,92.52) and (261.02,95.38) .. (257.5,95.38) .. controls (253.98,95.38) and (251.13,92.52) .. (251.13,89) -- cycle ;
%Shape: Circle [id:dp35403002679659523] 
\draw  [fill={rgb, 255:red, 0; green, 0; blue, 0 }  ,fill opacity=1 ] (413.75,201) .. controls (413.75,197.27) and (416.77,194.25) .. (420.5,194.25) .. controls (424.23,194.25) and (427.25,197.27) .. (427.25,201) .. controls (427.25,204.73) and (424.23,207.75) .. (420.5,207.75) .. controls (416.77,207.75) and (413.75,204.73) .. (413.75,201) -- cycle ;
%Straight Lines [id:da7708336479066533] 
\draw [color={rgb, 255:red, 208; green, 2; blue, 27 }  ,draw opacity=1 ][line width=2.25]    (119.75,134.75) -- (158.72,121.3) ;
\draw [shift={(162.5,120)}, rotate = 520.96] [color={rgb, 255:red, 208; green, 2; blue, 27 }  ,draw opacity=1 ][line width=2.25]    (17.49,-5.26) .. controls (11.12,-2.23) and (5.29,-0.48) .. (0,0) .. controls (5.29,0.48) and (11.12,2.23) .. (17.49,5.26)   ;

%Straight Lines [id:da25492342951506863] 
\draw [color={rgb, 255:red, 208; green, 2; blue, 27 }  ,draw opacity=1 ][line width=2.25]    (257.5,89) -- (220.29,101.71) ;
\draw [shift={(216.5,103)}, rotate = 341.15] [color={rgb, 255:red, 208; green, 2; blue, 27 }  ,draw opacity=1 ][line width=2.25]    (17.49,-5.26) .. controls (11.12,-2.23) and (5.29,-0.48) .. (0,0) .. controls (5.29,0.48) and (11.12,2.23) .. (17.49,5.26)   ;

%Straight Lines [id:da9372558960960504] 
\draw [color={rgb, 255:red, 208; green, 2; blue, 27 }  ,draw opacity=1 ][line width=2.25]    (257.5,89) -- (313.19,126.76) ;
\draw [shift={(316.5,129)}, rotate = 214.14] [color={rgb, 255:red, 208; green, 2; blue, 27 }  ,draw opacity=1 ][line width=2.25]    (17.49,-5.26) .. controls (11.12,-2.23) and (5.29,-0.48) .. (0,0) .. controls (5.29,0.48) and (11.12,2.23) .. (17.49,5.26)   ;

%Straight Lines [id:da24669700008237383] 
\draw [color={rgb, 255:red, 208; green, 2; blue, 27 }  ,draw opacity=1 ][line width=2.25]    (420.5,201) -- (363.8,162.26) ;
\draw [shift={(360.5,160)}, rotate = 394.35] [color={rgb, 255:red, 208; green, 2; blue, 27 }  ,draw opacity=1 ][line width=2.25]    (17.49,-5.26) .. controls (11.12,-2.23) and (5.29,-0.48) .. (0,0) .. controls (5.29,0.48) and (11.12,2.23) .. (17.49,5.26)   ;

% Text Node
\draw (128,164) node  [align=left] {$\displaystyle \mathbf{x}$};
% Text Node
\draw (255,118) node  [align=left] {$\displaystyle \mathbf{y}$};
% Text Node
\draw (386,218) node  [align=left] {$\displaystyle \mathbf{z}$};
% Text Node
\draw (132,99) node  [align=left] {$\displaystyle F_{1}$};
% Text Node
\draw (222,70) node  [align=left] {$\displaystyle F_{1}$};
% Text Node
\draw (307,84) node  [align=left] {$\displaystyle F_{2}$};
% Text Node
\draw (412,158) node  [align=left] {$\displaystyle F_{2}$};
% Text Node
\draw (199,129) node  [align=left] {$\displaystyle l_{1}$};
% Text Node
\draw (326,164) node  [align=left] {$\displaystyle l_{2}$};

\end{tikzpicture}

\end{center}
As there are no outside forces acting on the system, the equations of motion are of the form:
\begin{align*}
m_x {\bf \ddot x}&=F_1({\bf y}-{\bf x}),\\
m_y {\bf \ddot y}&=-F_1({\bf y}-{\bf x})+F_2({\bf z}-{\bf y}),\\
m_z {\bf \ddot z}&=-F_2({\bf z}-{\bf y}),\\
\end{align*}
where the internal forces $F_1,F_2$ are such that the length of the massless rods remains constant:
\begin{equation}\label{constrains}
\abs{{\bf x-\bf y}}=l_1,\qquad \abs{{\bf z-\bf y}}=l_2.
\end{equation}

It is easy to see that the\textit{ total kinetic energy} $K$
$$
K:=\frac12\zav{m_x\abs{{\bf \dot x}}^2+m_y\abs{{\bf \dot y}}^2+m_z\abs{{\bf \dot z}}^2},
$$
is conserved:
$$
\dot K=m_x {\bf \ddot x}\cdot {\bf \dot x}+m_y {\bf \ddot y}\cdot {\bf \dot y}+m_z {\bf \ddot z}\cdot {\bf \dot z}=
-F_1({\bf y}-{\bf x})\cdot ({\bf \dot y}-{\bf \dot x})-F_2({\bf z}-{\bf y})\cdot ({\bf \dot z}-{\bf \dot y})=0,
$$
where the last equality is a differential consequence of (\ref{constrains}).

By summing all equations of motion we find
$$
m_x{\bf \ddot x}+m_y{\bf \ddot y}+m_z{\bf \ddot z}=0,\qquad \Rightarrow \qquad {\bf \ddot T}=0,
$$
that the center of mass ${\bf T}$:
$$
{\bf T}:=\frac{m_x{\bf x}+m_y{\bf y}+m_z{\bf z}}{m_x+m_y+m_z},
$$
is moving uniformly in a straight line.

We can also say that the \textit{total linear momentum} ${\bf \dot T}$, is conserved.

Another conserved vector is the \textit{total angular momentum} 
$$
{\bf L}:=m_x{\bf x}\times {\bf \dot x}+m_y{\bf y}\times {\bf \dot y}+m_z{\bf z}\times {\bf \dot z}.
$$
$$
{\bf \dot L}=m_x{\bf x}\times {\bf \ddot x}+m_y{\bf y}\times {\bf \ddot y}+m_z{\bf z}\times {\bf \ddot z}=F_1({\bf x}-{\bf y})\times ({\bf y}-{\bf x})+F_2({\bf y}-{\bf z})\times({\bf z}-{\bf y})=0.
$$

Thus we have altogether 9 conserved quantities:
$$
l_1\ ,l_2\ ,K\ ,{\bf \dot T}, {\bf L}.
$$

Differential consequences of (\ref{constrains}) are: 
$$
({\bf y}-{\bf x})\cdot ({\bf \dot y}-{\bf \dot x})=0,\qquad ({\bf z}-{\bf y})\cdot ({\bf \dot z}-{\bf \dot y})=0,
$$
which tells us that the displacement vectors ${\bf u}:={\bf y}-{\bf x}$, ${\bf v}:={\bf z}-{\bf y}$ between points ${\bf y}, {\bf x}$ a ${\bf z}, {\bf y}$ are orthogonal to their respective velocities:
$$
{\bf u}\cdot {\bf \dot u}=0,\qquad {\bf v}\cdot {\bf \dot v}=0.
$$

Differentiating (\ref{constrains}) second time we get consequences from which it is possible to extract the internal forces $F_1, F_2$:
\begin{align*}
&0=\abs{{\bf \dot y}-{\bf \dot x}}^2+({\bf y}-{\bf x})\cdot ({\bf \ddot y}-{\bf \ddot x}), & &0=\abs{{\bf \dot z}-{\bf \dot y}}^2+({\bf z}-{\bf y})\cdot ({\bf \ddot z}-{\bf \ddot y}),\\
&\abs{{\bf \dot y}-{\bf \dot x}}^2=F_1 \frac{m_x+m_y}{m_x m_y}l_1^2+F_2 l_1 l_2 \frac{\cos\omega}{m_y}, &  &\abs{{\bf \dot z}-{\bf \dot y}}^2=F_2 \frac{m_z+m_y}{m_z m_y}l_2^2+F_1 l_1 l_2 \frac{\cos\omega}{m_y},\\
&F_1=\frac{m_y}{m_z }\frac{\tilde m_z l_2^2\abs{{\bf \dot y}-{\bf \dot x}}^2-\cos\omega l_1 l_2 m_z \abs{{\bf \dot z}-{\bf \dot y}}^2}{l_1^2 l_2^2\zav{\frac{\tilde m_x \tilde m_z}{m_x m_z}-\cos^2\omega}},& &F_2=\frac{m_y}{m_x}\frac{\tilde m_x l_1^2\abs{{\bf \dot z}-{\bf \dot y}}^2-\cos\omega l_1 l_2 m_x \abs{{\bf \dot y}-{\bf \dot x}}^2}{l_1^2 l_2^2\zav{\frac{\tilde m_x \tilde m_z}{m_x m_z}-\cos^2\omega}},
\end{align*}
where, for the sake of brevity, we have introduced the notation:
$$
\tilde m_x:=m_x+m_y,\qquad \tilde m_z:=m_z+m_y,
$$
and $\omega$ is the angle between the two rods:
$$
\cos\omega:=\frac{({\bf x-\bf y})\cdot ({\bf z-\bf y})}{l_1 l_2}.
$$
\subsection{Center of mass frame of reference}

Without the loss of generality we can move to an inertial frame of reference in which the center of mass is at rest at the origin:
\begin{align*}
{\bf T}&=0, & &\Rightarrow & {\bf y}&=-\frac{m_x{\bf x}+m_z{\bf z}}{m_y},\\
{\bf \dot T}&=0, & &\Rightarrow & {\bf \dot y}&=-\frac{m_x{\bf \dot x}+m_z{\bf \dot z}}{m_y}.\\
\end{align*}

The equations of motion become: 
\begin{align*}
m_x{\bf \ddot x}&=-F_1\frac{\tilde m_x{\bf x}+m_z {\bf z}}{m_y}, & F_1&=\frac{\tilde m_z l_2^2\abs{\tilde m_x{\bf \dot x}+m_z{\bf \dot z}}^2-\cos\omega l_1 l_2 m_z \abs{\tilde m_z{\bf \dot z}+m_x{\bf \dot x}}^2}{m_y m_z l_1^2l_2^2(\beta_0-\cos^2\omega)},\\
m_z {\bf \ddot z}&=F_2\frac{\tilde m_z{\bf z}+m_x {\bf x}}{m_y}, & F_2&=\frac{\tilde m_x l_1^2\abs{ m_x{\bf \dot x}+\tilde m_z{\bf \dot z}}^2-\cos\omega l_1 l_2 m_x \abs{ m_z{\bf \dot z}+\tilde m_x{\bf \dot x}}^2}{m_y m_x l_1^2l_2^2(\beta_0-\cos^2\omega)},\\
\cos\omega&=\frac{(\tilde m_z {\bf z}+ m_x {\bf x})\cdot (m_z {\bf z}+ \tilde m_x {\bf x})}{m_y^2 l_1 l_2}.
\end{align*}
Our conserved quantities are now in the form:
\begin{equation}\label{constrains2}
\abs{\tilde m_x {\bf x}+m_z {\bf z}}=l_1 m_y,\qquad \abs{m_x {\bf x}+\tilde m_z {\bf z}}=l_2 m_y.
\end{equation}

\begin{equation}\label{Kenergy}
K=\frac{\tilde m_x m_x}{2m_y} \abs{{\bf \dot x}}^2+\frac{\tilde m_z m_z}{2m_y} \abs{{\bf \dot z}}^2+\frac{m_x m_z}{m_y} ({\bf \dot x}\cdot {\bf \dot z}).
\end{equation}
\begin{equation}\label{Angmomentum}
 {\bf L}=\frac{\tilde m_x m_x}{m_y}({\bf x}\times {\bf \dot x})+\frac{\tilde m_z m_z}{m_y}({\bf z}\times {\bf \dot z})+\frac{m_x m_z}{m_y}({\bf x}\times {\bf \dot z}+{\bf z}\times {\bf \dot x}).
\end{equation}

\subsection{Coplanar case}
With a loss of generality, we now focus on the case, where the movement of  the whole system is confined to a plane.  Let us define the following notation: 
\begin{align*}
{\bf x}^\perp&:=\frac{({\bf x}\times {\bf \dot x})\times {\bf x}}{\abs{{\bf x}\times {\bf \dot x}}}=\frac{{\bf \dot x}\abs{{\bf x}}^2-{\bf x}({\bf x}\cdot {\bf \dot x})}{\abs{{\bf x}\times {\bf \dot x}}}, & {\bf \dot x}^\perp&:=\frac{({\bf x}\times {\bf \dot x})\times {\bf \dot x}}{\abs{{\bf x}\times {\bf \dot x}}}=\frac{-{\bf x}\abs{{\bf \dot x}}^2+{\bf \dot x}({\bf x}\cdot {\bf \dot x})}{\abs{{\bf x}\times {\bf \dot x}}}.
\end{align*}
The vector ${\bf x}^\perp$ is perpendicular to ${\bf x}$ and clearly lies in the plane to which belong both ${\bf x}$ and ${\bf \dot x}$. And similarly for ${\bf \dot x}^\perp$.

It is a stimulating exercise to verify the following properties:
\begin{align*}
&{\bf x}^\perp \cdot {\bf x}=0, & &\abs{{\bf x}^\perp}=\abs{{\bf x}},\\
&{\bf \dot x}^\perp \cdot {\bf \dot x}=0, & &\abs{{\bf \dot x}^\perp}=\abs{{\bf \dot x}},\\
&{\bf x}^\perp \cdot {\bf \dot x}=\abs{{\bf x}\times {\bf \dot x}}, & &{\bf x}^\perp \cdot {\bf \dot x}^\perp={\bf x} \cdot {\bf \dot x},\\
&{\bf x} \times {\bf \dot x}^\perp=\frac{({\bf x}\cdot {\bf \dot x})}{\abs{{\bf x}\times {\bf \dot x}}} {\bf x}\times {\bf \dot x}, & &{\bf x}^\perp \times {\bf \dot x}=-\frac{({\bf x}\cdot {\bf \dot x})}{\abs{{\bf x}\times {\bf \dot x}}}{\bf x}\times {\bf \dot x},\\
&{\bf x}^\perp \times {\bf \dot x}^\perp ={\bf x}\times {\bf \dot x}.
\end{align*}

We are considering the coplanar case specifically for the possibility to use the so called {pedal coordinates} $r,p$, which are a special type of coordinates that do not determine the position of each point on a curve (as Cartesian or polar coordinates do) but, instead, measures the distance of a point from the origin ($r$) and { the distance of a tangent in that point to the origin} ($p$). This clearly shows that the pedal coordinates can describe planar curves only. They are not general purpose coordinates.

Specifically, let us define for every point ${\bf x}$ and its tangent vector ${\bf \dot x}$ the following quantities:  
\begin{align*}
r&:=\abs{{\bf x}}, & \text{ The distance form the origin.}\\
p&:=\frac{{\bf x}^\perp \cdot {\bf \dot x}}{\abs{{\bf \dot x}}}, & \text{The oriented distance of the tangent from the origin.}\\
p_c&:=\frac{{\bf x} \cdot {\bf \dot x}}{\abs{{\bf \dot x}}}, & \text{The oriented distance of the normal from the origin.}\\
\kappa&:=\frac{{\bf \dot x}^\perp\cdot {\bf \ddot x}}{\abs{{\bf \dot x}}^3}, &\text{The oriented curvature.}
\end{align*} 
The so-called contrapedal coordinate $p_c$ is {not} an independent variable, but in some cases it is more convenient to use. Its relation with $(r,p)$ is as follows:
$$
p^2+p_c^2=r^2.
$$
One of the many advantages of pedal coordinates is that computation of the curvature is much more simple.
$$
\frac{1}{r}\frac{\dd p}{\dd r}=\kappa.
$$

In terms of pedal coordinates we are able to prove the following:
\begin{theorem}\label{Th1} The curve traced by the node ${\bf x}$ of a coplanar free double linkage as viewed from the center of mass is given by the \textit{algebraic} pedal equation:
\begin{equation}\label{Pedeq}
\zav{\abs{\bf L}^2-2L_1\zav{r^2}K r^2}\zav{L_1\zav{r^2}\frac{p\beta}{p_c}+L_2\zav{r^2}}^2+\abs{\bf L}^2 P_2\zav{r^2}=0,
\end{equation}
where $M:=m_x+m_y+m_z$ is total mass and
\begin{align}
L_1\zav{r^2}&:=\frac{M}{4\tilde m_z m_z}\zav{b+\frac{a}{r^2}},\\
L_2\zav{r^2}&:=\frac{aM^2m_y}{8m_z^2\tilde m_z^2 r^2}\zav{1+\frac{r_0r_1}{r^2}},\\
P_2\zav{r^2}&=4\frac{m_x^2M^2}{\tilde m_z^2}\beta^2+\frac{m_y^3 m_x M^5}{4m_z^3\tilde m_z^4}\frac{\zav{r_1^2-r_0^2}^2}{r^4}.\\
a&=M m_y(r_1-r_0)^2,\qquad b=4m_xm_z,\\
\label{distances} r_{1}&=\frac{l_1\tilde m_z+ l_2m_z}{M},\qquad r_{0}=\frac{l_1\tilde m_z- l_2m_z}{M}\\
\beta&:=\frac{m_y M}{2  m_z\tilde m_z  r^2}\sqrt{\zav{r_1^2- r^2}\zav{r^2-r_0^2}}.
\end{align}
\end{theorem}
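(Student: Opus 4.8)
The plan is to eliminate both the partner node ${\bf z}$ and the dynamics, keeping only the two conserved scalars $K$ and $\abs{{\bf L}}$ together with the (differentiated) length constraints, and finally to eliminate the scalar speed $v:=\abs{{\bf \dot x}}$.

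First I would read the two constraints \eqref{constrains2} as a linear system for the rotation invariants $\abs{{\bf z}}^2$ and ${\bf x}\cdot{\bf z}$, with $r^2$ as a parameter; its determinant is a nonzero multiple of $m_y M$, so both invariants come out as affine functions of $r^2$. Hence $\abs{{\bf x}\times{\bf z}}^2=r^2\abs{{\bf z}}^2-\zav{{\bf x}\cdot{\bf z}}^2$ is a quadratic in $r^2$ whose two roots are exactly the configurations in which ${\bf x},{\bf y},{\bf z}$ and the origin are collinear. A short computation identifies these roots with $r_0^2,r_1^2$ from \eqref{distances} and pins down the leading coefficient, giving $\abs{{\bf x}\times{\bf z}}=r^2\beta$. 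In the orthogonal frame $\szav{{\bf x},{\bf x}^\perp}$ (both of length $r$) this reads ${\bf z}=\tfrac{{\bf x}\cdot{\bf z}}{r^2}{\bf x}+\beta\,{\bf x}^\perp$, the sign of $\beta$ being fixed by the chosen orientation, so ${\bf z}$ is known once $r$ is.

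Next, writing ${\bf \dot z}=c_1{\bf x}+c_2{\bf x}^\perp$ and imposing the two differentiated constraints, I would use the listed identities (in particular ${\bf x}^\perp\cdot{\bf \dot x}=p\,v$ and ${\bf x}\cdot{\bf \dot x}=p_c\,v$) to turn them into a linear system for $(c_1,c_2)$ whose determinant is a nonzero multiple of $\beta m_y M$. Solving expresses $c_1,c_2$ as $v$ times functions of $r,p,p_c$ that are linear in $(p,p_c)$. No equation of motion is needed: in the coplanar CM frame the configuration and velocity of the whole linkage are two-dimensional, so the two conservation laws together with the constraints already fix everything up to the single unknown $v$. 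Substituting ${\bf z}$ and ${\bf \dot z}$ into \eqref{Kenergy} and \eqref{Angmomentum}, and using that ${\bf L}$ is normal to the plane, I would obtain $K=v^2\,\mathcal K\zav{r,p,p_c}$ with $\mathcal K$ homogeneous quadratic in $(p,p_c)$, and $\abs{{\bf L}}=v\,\mathcal L\zav{r,p,p_c}$ with $\mathcal L$ homogeneous linear in $(p,p_c)$. Eliminating $v$ yields the single relation $K\,\mathcal L^2=\abs{{\bf L}}^2\,\mathcal K$, already bilinear in $K$ and $\abs{{\bf L}}^2$ exactly as in \eqref{Pedeq}.

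The remaining, and hardest, step is to recognise this relation as \eqref{Pedeq}. The key checkpoint is that $\mathcal L$, being linear in $(p,p_c)$ with $r$-dependent coefficients, must collapse to $\mathcal L=\varphi(r)\zav{L_1\zav{r^2}\,p\beta+L_2\zav{r^2}\,p_c}$; dividing the resulting equation through by $p_c^2$ then produces the bracket $L_1\,p\beta/p_c+L_2$ displayed in the theorem. Matching $\mathcal K$ against $\zav{L_1\,p\beta+L_2\,p_c}^2+P_2\,p_c^2$, and repeatedly using $p^2+p_c^2=r^2$ to clear $p_c$ from the numerators, should reproduce $L_1,L_2$ and the remaining term $P_2$, with the constants $a,b$ falling out of the coefficient bookkeeping. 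I expect this final consolidation — organising the rational functions of $r,p,p_c$ into the announced factored form — to be the main obstacle; the conceptual content lies entirely in the earlier steps, and what remains is lengthy but mechanical symbolic algebra.
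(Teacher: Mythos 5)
Your proposal follows essentially the same route as the paper's proof: decompose ${\bf z}$ and ${\bf \dot z}$ in a planar basis attached to ${\bf x}$, use the two length constraints to pin down ${\bf z}$ (equivalently the paper's $\alpha,\beta$, your rotation invariants ${\bf x}\cdot{\bf z}$ and $\abs{{\bf z}}^2$) as functions of $r$ alone, treat the differentiated constraints as a nonsingular linear system for the velocity coefficients, substitute into $K$ and $\abs{{\bf L}}$, and eliminate $\abs{{\bf \dot x}}$ between the resulting homogeneous relations. The only cosmetic differences are that you expand ${\bf \dot z}$ in $\szav{{\bf x},{\bf x}^\perp}$ rather than the paper's $\szav{{\bf \dot x},{\bf \dot x}^\perp}$ (both systems have nonzero determinant away from $\beta=0$, so this is harmless) and that you obtain $r_0,r_1$ from the vanishing of $\abs{{\bf x}\times{\bf z}}$ rather than from $\abs{\cos\omega}\leq 1$; like the paper, you defer the final consolidation into $L_1,L_2,P_2$ to routine symbolic algebra.
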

\begin{proof}
Since ${\bf z}, {\bf \dot z}$ are in the same plane as ${\bf x},{\bf \dot x}$, we can write:
$$
{\bf z}=\alpha {\bf x}+\beta {\bf x}^\perp,\qquad {\bf \dot z}=a {\bf \dot x}+b {\bf \dot x}^\perp,
$$
for some scalar functions $\alpha,\beta,a,b$.

There are exactly 12 ways how to write ${\bf z},{\bf \dot z}$ as a linear combination of two base vectors chosen from the set $\szav{{\bf x},{\bf x}^\perp,{\bf \dot x},{\bf \dot x}^\perp}$. But only one allows us to compute the coefficients $\alpha,\beta,a,b$ in geometrical terms, as we will see.

Substituting ${\bf z}$ into (\ref{constrains2}) we get
$$
\zav{\tilde m_x+m_z\alpha }^2+m_z^2\beta ^2=\frac{l_1^2 m_y^2}{r^2},\qquad
\zav{ m_x+\tilde m_z\alpha }^2+{\tilde m_z}^2\beta ^2=\frac{l_2^2 m_y^2}{r^2},
$$ 
which yields:
$$
\alpha=\alpha_0+\frac{\alpha_1}{r^2},\qquad  \alpha^2+\beta^2=\beta_0+\frac{\beta_1}{r^2},
$$
where
$$
\alpha_0:=-\frac{\tilde m_z\tilde m_x+m_z m_x}{2m_z\tilde m_z},\qquad \alpha_1:=\frac{l_1^2 {\tilde m_z}^2-l_2^2 m_z^2}{2m_z\tilde m_z (m_x+m_y+m_z)}m_y,
$$
$$
\beta_0:=\frac{\tilde m_x  m_x}{\tilde m_z m_z},\qquad
\beta_1:=\frac{l_2^2 \tilde m_x m_z-l_1^2 m_x\tilde m_z}{m_z\tilde m_z(m_x+m_y+m_z)}m_y.
$$

With this information, we can write the quantity $\cos \omega$ as follows:
$$
\cos \omega=\frac{\tilde m_z m_z}{m_y^2 l_1 l_2}\zav{\alpha^2+\beta^2-2\alpha\alpha_0+\beta_0} r^2.
$$
Now, since this expression involves only variable $r$ and since $\abs{\cos \omega}\leq 1$, this gives us ultimately a restriction on values of $r$ in the form:
$$
r_0^2\leq r^2\leq r_1^2,
$$
see (\ref{distances}).
This simply reflect the fact, that point ${\bf x}$ cannot wander away from the center of mass arbitrarily far.

Armed with this information, we can write the quantity $\beta$ in a more illuminating manner:
$$
\beta^2 =\frac{m_y^2 M^2}{4  m_z^2\tilde m_z^2  r^4}\zav{r_1^2- r^2}\zav{r^2-r_0^2}.
$$

Substituting both ${\bf z},{\bf \dot z}$ into the differential consequence of (\ref{constrains2}) we get:
$$
\zav{(\tilde m_x+\alpha m_z) {\bf x}+\beta m_z {\bf x}^\perp}\cdot \zav{(\tilde m_x+a m_z) {\bf \dot x}+b m_z {\bf \dot x}^\perp}=0, 
$$
$$
\zav{(m_x+\alpha \tilde m_z) {\bf x}+\beta \tilde m_z {\bf x}^\perp}\cdot \zav{( m_x+a \tilde m_z) {\bf \dot x}+b \tilde m_z {\bf \dot x}^\perp}=0,
$$
which using properties of ${\bf x}^\perp,{\bf \dot x}^\perp$ translate into pedal coordinates as follows:
$$
((\tilde m_x+\alpha m_z)(\tilde m_x+a m_z)+\beta b m_z^2)p_c+(\beta m_z (\tilde m_x+a m_z)-b m_z (\tilde m_x+\alpha m_z))p=0,
$$
$$
(( m_x+\alpha \tilde m_z)(m_x+a \tilde m_z)+\beta b {\tilde m_z}^2)p_c+(\beta \tilde m_z ( m_x+a \tilde m_z)-b \tilde m_z (m_x+\alpha \tilde m_z))p=0.
$$
This is a linear system in $a,b$ and its solution is:
\begin{align*}
a&=\frac{(\alpha p-\beta p_c)((\alpha-2\alpha_0)p_c+\beta p)+\beta_0 p p_c}{\beta r^2},\\
b&=\frac{(\alpha p_c+\beta p)((\alpha-2\alpha_0)p_c+\beta p)+\beta_0 p_c^2}{\beta r^2}.
\end{align*}

Interesting consequences of these identities that will be useful momentarily are 
$$
p b-p_c a=(\alpha-2\alpha_0)p_c+\beta p,\qquad a p+b p_c=\frac{\alpha}{\beta} ((\alpha-2\alpha_0)p_c+\beta p)+\frac{\beta_0 p_c}{\beta}.
$$

Using the properties of ${\bf x}^\perp, {\bf \dot x}^\perp$ we can also translate our remaining conserved quantities to pedal coordinates:
\begin{equation}\label{Kenergycopl}
K=\frac{\tilde m_z m_z}{2m_y}\zav{\beta_0+a^2+b^2+\frac{2m_x}{\tilde m_z} a} \abs{{\bf \dot x}}^2.
\end{equation}
\begin{equation*}
 {\bf L}=\frac{\tilde m_z m_z}{m_y}\zav{\beta_0 p+\alpha (a p+b p_c)+\beta (bp-ap_c)+\frac{m_x}{\tilde m_z}\zav{(\alpha+a)p+(b-\beta)p_c}}\frac{\abs{{\bf \dot x}}}{\abs{{\bf x}\times {\bf \dot x}}} {\bf x}\times {\bf \dot x}.
\end{equation*}
Substituting everything we arrive at this rather simple expression:
%\begin{equation}\label{Angmomentumcopl}
% \abs{{\bf L}}=\frac{\tilde m_z m_z}{m_y}\zav{ ((\alpha-2\alpha_0)p_c+\beta p)\zav{\frac{\alpha^2+\beta^2}{\beta}+\frac{m_x}{\tilde m_z}\frac{\alpha}{\beta}}+\frac{\beta_0 p_c}{\beta}\zav{\alpha+\frac{m_x}{\tilde m_z}}+p\zav{\beta_0+\frac{m_x}{\tilde m_z}\alpha}-\frac{m_x}{\tilde m_z}\beta p_c}\abs{{\bf \dot x}}.
%\end{equation}
\begin{equation}\label{Angmomentumcopl}
\frac{\abs{{\bf L}}}{\abs{{\bf \dot x}}}=\frac{p_c }{\beta}\zav{ L_1\zav{r^2}\frac{p\beta}{p_c}+L_2\zav{r^2}}.
\end{equation}

The indices of polynomials $L_1,L_2$ are chosen so that they indicate their degree, i.e. $L_1$ is a first degree polynomial in $1/r^2$, and $L_2$ is of second degree. 

The kinetic energy cannot be approached so directly. We must first bring it into the following form:
\begin{align*}
\frac{2m_y}{m_z\tilde m_z}\frac{K}{\abs{{\bf \dot x}}^2}&=
\zav{\zav{\frac{\alpha_1^2}{r^4}-\alpha_0^2+\beta_0} \frac{p_c}{\beta r}+\zav{\frac{\alpha_1}{r^2}+\alpha_0+\frac{m_x}{\tilde m_z}}\frac{p}{r}}^2\\
& +
\zav{ \zav{\frac{\alpha_1}{r^2}-\alpha_0-\frac{m_x}{\tilde m_z}} \frac{p_c}{r}+\beta \frac{p}{r}}^2
+ \frac{p_c^2+p^2}{r^2}\zav{\beta_0-\frac{m_x^2}{\tilde m_z^2}}.\nonumber
\end{align*}
Now we can see that $K$ can be written as follows:
\begin{equation}\label{kinenergycopl}
\frac{K}{\abs{{\bf \dot x}}^2}=\frac{p_c^2}{r^2\beta^2}\zav{\frac12 L1\zav{r^2} \zav{\frac{p\beta}{p_c}}^2+L_2\zav{r^2}\frac{p\beta}{p_c}+K_3\zav{r^2}},
\end{equation}
where
\begin{align*}
%K_1\zav{r^2}&:=\frac{1}{2\tilde m_z}\zav{M m_x+\frac{l_2^2 m_y m_z}{r^2}}=\frac12 L_1\zav{r^2},\\
%K_2\zav{r^2}&:=\frac{m_y^2 l_2^2 M}{2\tilde m_z^2 r^2}\zav{1+\frac{r_0r_1}{ r^2}}=L_2\zav{r^2},\\
%K_3\zav{\frac{1}{r^2}}:&=\frac{m_y^2}{8\tilde m_z^3 m_z^2}\left(-m_x M^3+\frac{M}{r^2}\zav{2m_x\tilde m_z^2 l_1^2+m_z l_2^2(m_y M+2m_x m_z)}\right.\\
%&\left. -\frac{l_1^2\tilde m_z^2-l_2^2m_z^2}{M r^4}\zav{m_x\tilde m_z^2 l_1^2-m_z l_2^2\zav{2m_y M+m_xm_z}}+\frac{\zav{l_1^2\tilde m_z^2-l_2^2m_z^2}^2}{M^2 r^6}l_2^2 m_ym_z\right) .
K_3\zav{r^2}&:=\frac{m_y^2 M^2}{8\tilde m_z^3 m_z^2}\zav{-m_x M\zav{1-\frac{r_0^2}{r^2}}\zav{1-\frac{r_1^2}{r^2}}+\frac{l_2^2m_zm_y}{r^2}\zav{1-\frac{r_0r_1}{r^2}}^2}.
\end{align*}
Again, the index indicates the degree in $1/r^2$.

Finally, eliminating the factor $\abs{{\bf \dot x}}^2$ form both $K$ and $\abs{{\bf L}}^2$, we end up with the equation (\ref{Pedeq}), which is what we want. The polynomial $P_2$ is obtained as $P_2:=2K_3 L_1-L_2^2$ and turn outs to be just of second degree.
\end{proof}
%\begin{align*}
%P_1\zav{\frac{1}{r^2}}&=\abs{\bf L}^2 K_1\zav{\frac{1}{r^2}}-r^2 K L_1\zav{\frac{1}{r^2}}^2,\\
%P_2\zav{\frac{1}{r^2}}&=\abs{\bf L}^2 K_2\zav{\frac{1}{r^2}}-2 r^2 K L_1\zav{\frac{1}{r^2}}L_2\zav{\frac{1}{r^2}}\\
%P_3\zav{\frac{1}{r^2}}&=\abs{\bf L}^2 K_3\zav{\frac{1}{r^2}}-r^2 K L_2\zav{\frac{1}{r^2}}^2.
%\end{align*}
%$$
%\zav{\abs{\bf L}^2 K_1\zav{\frac{1}{r^2}}-K L_1\zav{\frac{1}{r^2}}^2}\zav{\frac{\beta p}{p_c}}^2+
%\zav{\abs{\bf L}^2 K_2\zav{\frac{1}{r^2}}-2K L_1\zav{\frac{1}{r^2}}L_2\zav{\frac{1}{r^2}}}\frac{\beta p}{p_c}+
%\zav{\abs{\bf L}^2 K_3\zav{\frac{1}{r^2}}-K L_2\zav{\frac{1}{r^2}}^2}=0.
%$$
\section{Pedal coordinates}\label{Pedsec}
To understand what this result is telling us, we must first better understand the pedal coordinates and their many advantages. 

Let us introduce again for any point ${\bf x}\in\mathbb{R}^2$ on a plane curve $\gamma$ the following symbols: 
\begin{align*}
{\bf x}&:=(x,y), & {\bf x}^\perp&:=(-y,x), & {\bf x}\cdot {\bf y}^\perp&=-{\bf x}^\perp \cdot {\bf y}, & \zav{{\bf x}^\perp}^\perp&=-{\bf x}. \\
p&:=\frac{{\bf x}^\perp \cdot {\dot {\bf x}}}{\abs{\dot {\bf x}}}, &  p_c&:=\frac{{\bf x} \cdot {\dot {\bf x}}}{\abs{\dot {\bf x}}}, & \kappa&:=\frac{{\dot {\bf x}}^\perp \cdot {\ddot {\bf x}} }{\abs{\dot {\bf x}}^3}, & p^2+p_c^2&=r^2.\\
\end{align*}

This diagram roughly illustrates the definition of pedal coordinates $p,p_c,r$:
\begin{center}
{\small
\begin{tikzpicture}[domain=-4:5]
\draw [thick,color=blue] (-4,3) to [out=0,in=180] (0,2) to [out=0, in=225] (2,3);
\coordinate [label=right:{$\gamma$}] (M1) at (2,3);
\coordinate [label=above:{${\bf x}$}] (M2) at (0,2);
\draw [thin] (-4,2) to (2,2);
\draw [dashed] (-3,2) to (-3,0);
\draw [dashed] (0,2) to (-3,0);
\filldraw (-3,0) circle (0.05);
\filldraw (0,2) circle (0.05);
\filldraw (-3,2) circle (0.05);
\coordinate [label=below:{$0$}] (M3) at (-3,0);
\coordinate [label=above:{$P({\bf x})$}] (M4) at (-3,2);
\draw [densely dotted] (0,2) to (0,0);
\draw [densely dotted] (-3,0) to (0,0);
\filldraw (0,0) circle (0.05);
%\coordinate [label=below:{$P_c(x)$}] (M5) at (0,0);
\coordinate [label=left:{$p$}] (M6) at (-3,1);
\coordinate [label=below right:{$r$}] (M7) at (-1.5,1);
\coordinate [label=below right:{$p_c$}] (M8) at (-1.5,0);
\end{tikzpicture}
}
\end{center}

But with $p,p_c$ we must remember that these are signed distances, which can be negative. Some of the simplest curves are given in pedal coordinates as follows:
\begin{align*}
p&=a, &\text{line distant $a$.} \\
r&=a, &\text{point distant $a$.} \\
p_c&=a, &\text{Involute of a circle.} \\
2pR&=r^2+R^2-a^2, &\text{circle of radius $R$ and center distant $a$.}
\end{align*}
The involute of a circle is not usually regarded as a ``simple curve'' but in pedal coordinates it has an extremely simple equation so perhaps it is.

\begin{remark}
Pedal coordinates do not tell us everything about the curve and they actually describe many curves at once -- if you choose to differentiate between them.

The equation $p=a$ is valid for any line distant $a$ and $r=a$ for any point distant $a$, etc. Obviously, the pedal coordinates do not care about the rotation around the pedal point and about the curve's parametrization, but it is actually not easy to tell in general the nature of ambiguity associated to a pedal equation -- in fact, it differs from equation to equation. (For more information see \cite{Blaschke6}.)

This can be seen, actually, as an advantage of pedal coordinates over other systems if you are interested only in the general shape of the curve and do not want to be distracted by other details. 
\end{remark}

\subsection{Force problems}
Perhaps the best feature of pedal coordinates is that they are particularly suited for describing orbits of force problems in classical mechanics  (\cite{Blaschke6}). Specifically, the following holds: 
\begin{theorem} \label{Bl1} (From \cite{Blaschke6}.)
Consider a dynamical system:
\begin{equation}\label{dynsys}
{\bf \ddot x}=F^\prime\zav{\abs{{\bf x}}^2}{\bf x}+2 G^\prime\zav{\abs{{\bf x}}^2}{\bf \dot x}^\perp,
\end{equation}
describing an evolution of a test particle (with position ${\bf x}$ and velocity ${\bf \dot x}$) in the plane in the presence of a central $F$ and a Lorentz like $G$ potential. The quantities:
$$
L={\bf x}\cdot {\bf \dot x}^\perp+G\zav{\abs{{\bf x}}^2}, \qquad c=\abs{{\bf\dot x}}^2-F\zav{\abs{{\bf x}}^2},
$$
are conserved in this system.

Then the curve traced by ${\bf x}$ is given in pedal coordinates by
\begin{equation}
\frac{\zav{L-G(r^2)}^2}{p^2}=F(r^2)+c,
\end{equation}
with the pedal point at the origin. 
\end{theorem}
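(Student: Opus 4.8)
The plan is to prove the statement in two stages: first verify that $L$ and $c$ are genuine first integrals of (\ref{dynsys}), and then translate these two conserved quantities directly into pedal coordinates, where the claimed equation drops out almost algebraically without ever integrating the equation of motion.

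For the conservation laws I would simply differentiate each quantity in time and substitute the force law ${\bf \ddot x}=F^\prime{\bf x}+2G^\prime{\bf \dot x}^\perp$. For $c=\abs{{\bf \dot x}}^2-F\zav{\abs{{\bf x}}^2}$ one gets $\dot c=2{\bf \dot x}\cdot{\bf \ddot x}-2F^\prime({\bf x}\cdot{\bf \dot x})$; inserting the force law and using ${\bf \dot x}\cdot{\bf \dot x}^\perp=0$ annihilates the Lorentz term, leaving $\dot c=2F^\prime({\bf x}\cdot{\bf \dot x})-2F^\prime({\bf x}\cdot{\bf \dot x})=0$. For $L={\bf x}\cdot{\bf \dot x}^\perp+G$ I would lean on the algebraic identities already listed in the excerpt: since $\perp$ is a fixed rotation, $\tfrac{\dd}{\dd t}{\bf \dot x}^\perp={\bf \ddot x}^\perp$, so $\tfrac{\dd}{\dd t}({\bf x}\cdot{\bf \dot x}^\perp)={\bf \dot x}\cdot{\bf \dot x}^\perp+{\bf x}\cdot{\bf \ddot x}^\perp={\bf x}\cdot{\bf \ddot x}^\perp=-{\bf x}^\perp\cdot{\bf \ddot x}$, using ${\bf x}\cdot{\bf y}^\perp=-{\bf x}^\perp\cdot{\bf y}$. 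Feeding in the force law and the identities ${\bf x}^\perp\cdot{\bf x}=0$ and ${\bf x}^\perp\cdot{\bf \dot x}^\perp={\bf x}\cdot{\bf \dot x}$ reduces this to $-2G^\prime({\bf x}\cdot{\bf \dot x})$, which is cancelled exactly by $\tfrac{\dd}{\dd t}G\zav{\abs{{\bf x}}^2}=2G^\prime({\bf x}\cdot{\bf \dot x})$. Hence $\dot L=0$. It is worth noting that the coefficient $2$ multiplying $G^\prime$ in (\ref{dynsys}) is precisely what makes this cancellation work.

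The second stage is where the pedal coordinates do the real work. From the definition $p=\tfrac{{\bf x}^\perp\cdot{\bf \dot x}}{\abs{{\bf \dot x}}}$ together with ${\bf x}\cdot{\bf \dot x}^\perp=-{\bf x}^\perp\cdot{\bf \dot x}$, the conserved quantity $L$ rewrites as $L=-p\abs{{\bf \dot x}}+G(r^2)$, i.e. $L-G(r^2)=-p\abs{{\bf \dot x}}$. Squaring and dividing by $p^2$ gives $\tfrac{\zav{L-G(r^2)}^2}{p^2}=\abs{{\bf \dot x}}^2$. Finally the second integral supplies $\abs{{\bf \dot x}}^2=F(r^2)+c$, and substituting produces exactly the asserted pedal equation.

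I expect no serious obstacle; the argument is short and purely computational. The only points requiring care are the sign and duality conventions of the $\perp$ operation and the chain-rule factors of $2$ arising when differentiating $F\zav{\abs{{\bf x}}^2}$ and $G\zav{\abs{{\bf x}}^2}$. The one genuinely conceptual step is recognizing that the velocity-dependent Lorentz term spoils conservation of the bare angular momentum ${\bf x}\cdot{\bf \dot x}^\perp$, so that it is the $G$-corrected combination $L$ that is conserved; identifying this corrected $L$ as the object that enters the pedal expression is the crux of the proof.
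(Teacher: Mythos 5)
Your proof is correct: both conservation checks and the pedal translation $L-G(r^2)=-p\,\abs{{\bf \dot x}}$ followed by $\abs{{\bf \dot x}}^2=F(r^2)+c$ are exactly right under the paper's conventions ${\bf x}^\perp=(-y,x)$, ${\bf x}\cdot{\bf y}^\perp=-{\bf x}^\perp\cdot{\bf y}$, ${\bf x}^\perp\cdot{\bf \dot x}^\perp={\bf x}\cdot{\bf \dot x}$. The paper itself states this theorem without proof, importing it from \cite{Blaschke6}, and your argument --- verify the two first integrals, then eliminate $\abs{{\bf \dot x}}^2$ between them --- is precisely the approach taken there.
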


Notice that only two conservation laws are required to obtain the shape of the orbit (in pedal coordinates) -- energy and angular momentum. These are exactly the same conservation laws we have used to solve the double linkage.

\begin{example}

Theorem \ref{Bl1} can be used to interpret a large family of pedal equations. For instance, the above result tells us that the orbits of the Kepler problem:
$$
{\bf \ddot x}=-\frac{GM}{r^3}{\bf x},
$$ 
which are of course conic sections with the origin at a focus, are given in pedal coordinates as
$$
\frac{L^2}{p^2}=\frac{2G M}{r^2}+c.
$$
We can rewrite this as follows:
\begin{equation}\label{focalconic}
r_0r_1\zav{\frac{1}{p^2}-\frac{1}{r^2}}=-\frac{(r-r_0)(r-r_1)}{r^2},
\end{equation}
where the roots $r_0,r_1$ satisfy:
$$
r_0+r_1=-\frac{2GM}{c},\qquad r_0r_1=-\frac{L^2}{c}.
$$

Since $|p|<r$, i.e. distance to a tangent is smaller than distance to a point, we have for every point on a curve:
\begin{equation}\label{basicine}
\frac{1}{p^2}-\frac{1}{r^2}\geq 0.
\end{equation}

This inequality implies a restriction for possible values of $r$ in the form:
$$
-\frac{(r-r_0)(r-r_1)}{r_0r_1 r^2}\geq 0.
$$ 

If $ 0<r_0<r_1$ this translates as $r_0\leq r \leq r_1$, and the curve is therefore bounded -- i.e. we have an ellipse with $r_0$ being the periapsis and $r_1$ the apoapsis.

If $r_0<0, r_1>0$ we have $r\geq r_1$ an unbounded curve, i.e. a hyperbola.

The case $r_0,r_1<0$ is impossible. There would be no allowed values of $r$. 
\end{example}
\begin{example}
Similarly, solutions of a dynamical system of the form
$$
{\bf \ddot x}= -\omega{\bf x},
$$ 
are very easy to compute and they are ellipses for $\omega>0$ but viewed from the center rather than from the focus.

In pedal coordinates we have thus an equation for a \textit{central ellipse}  given by:
$$
\frac{L^2}{p^2}= -\omega r^2+c,
$$
or 
\begin{equation}\label{centralconic}
a^2b^2\zav{\frac{1}{p^2}-\frac{1}{r^2}}=-\frac{(r^2-a^2)(r^2-b^2)}{r^2},
\end{equation}
where the roots $a$, $b$, given by
$$
a+b=\frac{c}{\omega},\qquad ab=\frac{L^2}{\omega},
$$
are the semi-major and the semi-minor axis respectively.

From (\ref{basicine}) we can one again infer that (given $a<b$) we have
 $a\leq r\leq b$. 
\end{example}
\begin{example}
Finally, take a problem of determining the orbit of a charged particle in a uniform magnetic field:
$$
{\bf \ddot x}= 2a {\bf \dot x}^\perp,
$$
(that is experiencing the Lorentz force only). It is well known that solutions are circles. In pedal coordinates we thus have: 
$$
\frac{\zav{L-a r^2}^2}{p^2}=c\qquad \Rightarrow\qquad \sqrt{c}p= a r^2-L.
$$
\end{example}

Pedal coordinates are actually not limited by equations of the form (\ref{dynsys}) -- as demonstrated by our very own example of free double linkage. Many more dynamical systems and even problems of calculus of variation can be readily translated into pedal coordinates. For detail see \cite{Blaschke9}. 
\subsection{Transforms} The second feature of pedal coordinates is that they are well suited for performing many \textit{transformations of curves} and thus make possible to connect seemingly unrelated problems (as we will see).

We will mention just a few examples that will be useful for understanding the movement of double linkage. For a more comprehensive lists see \cite{Blaschke6}, \cite{Lawrence}, \cite{Yates}, \cite{Lockwood},\cite{Hilton} as well as the web pages \cite{krivkystranky}. 

\subsubsection{Scaling} From definitions of quantities $p,p_c,r$ it is clear that a classical radial scaling that transforms a given point ${\bf x}$ to the point
$$
S_\alpha:\qquad {\bf \tilde x}=\frac{{\bf x}}{\alpha},
$$
is given in pedal coordinates as follows:
$$
f(p,r,p_c)=0 \qquad \stackrel{S_\alpha}{\longrightarrow}\qquad f(\alpha p, \alpha r,\alpha p_c)=0.
$$
\subsubsection{Pedal} Pedal transform maps any point on a given curve ${\bf x}$ to the orthogonal projection ${\bf \tilde x}$ of the origin to the tangent at ${\bf x}$. In the diagram at the beginning of the section, ${\bf \tilde x}$ is denoted by $P({\bf x})$.

Algebraically, the new point is given by 
$$
P:\qquad {\bf \tilde x}={\bf x}-p_c \frac{\bf \dot x}{\abs{\bf \dot x}}. 
$$

This transformation has very nice properties. It maps a focal ellipse or a hyperbola (that is with the origin at the focus) to its circumcircle. The pedal of a parabola is a line  -- not the directrix, though; this line is parallel to the directrix and passes through the vertex. A central rectangular hyperbola is mapped to the Lemniscate of Bernoulli and so on. 

In fact, pedal coordinates was originally devised (as the name suggests) to make the pedal transformation easy to do. In Cartesian coordinates it is generally required to solve a differential equation. But the same thing can be by done algebraically in pedal coordinates.

Specifically (see \cite[p. 228]{williamson}), to any curve given by the equation
$$
f(p,r, p_c)=0,
$$ 
the pedal curve satisfies the equation
$$
f\zav{r,\frac{r^2}{p},\frac{r}{p}p_c}=0.
$$

With this information, checking some of our claims is an elementary exercise. For instance, the following line of computation: 
$$
\frac{L^2}{p^2}=\frac{M}{r}+c \qquad \stackrel{P}{\longrightarrow} \qquad \frac{L^2}{r^2}=\frac{M p}{r^2}+c \qquad \Rightarrow \qquad Mp= -c r^2+L^2.
$$
proves that the pedal of a (focal) conic is, indeed, a circle. Or, inversely, taking for granted that the pedal of a central rectangular hyperbola ($\frac{a^2 b^2}{p^2}=r^2$) is the Lemniscate of Bernoulli, we can in no time produce a pedal equation for it:
$$
\frac{a^2 b^2}{p^2}=r^2 \qquad \stackrel{P}{\longrightarrow} \qquad \frac{a^2 b^2}{r^2}=\frac{r^4}{p^2} \qquad \Rightarrow \qquad r^3=ab p.
$$

\subsubsection{Dual parallel curves}
Another important transform is \textit{dual parallel curves} $E^\star_\alpha$, i.e. a transform such that  any given point ${\bf x}$ on a curve is transformed to point ${\bf \tilde x}$ given by
$$
E^\star_\alpha:\qquad {\bf \tilde x}:=\frac{{\bf x}}{1+\alpha r}.
$$
(For the reason why this transform is called dual parallel see \cite{Blaschke6}.)

It is a stimulating exercise to verify that in pedal coordinates its action is given as follows:
\begin{equation}\label{dualtr}
f\zav{\frac{1}{p^2}-\frac{1}{r^2},\frac{1}{r}}=0 \qquad\stackrel{E^\star_\alpha}{\longrightarrow}\qquad  f\zav{\frac{1}{p^2}-\frac{1}{r^2},\frac{1}{r}-\alpha}=0.
\end{equation}

It is easy to see that the dual parallel of a line is a focal conic:
$$
\frac{1}{p^2}=c \qquad\stackrel{E^\star_\alpha}{\longrightarrow}\qquad  \frac{1}{p^2}=\frac{2\alpha}{r}+c-\alpha^2,
$$
and that focal conics are preserved by this operation since it holds:
$$
E^\star_\alpha E^\star_\beta=E^\star_{\alpha+\beta},
$$
i.e. $E^*_\alpha$ is a group of transformations (with the identity being $E^\star_0$).

This transformation appears in a generalization of Newton's theorem of revolving orbits \cite{Newton} by Mahomed and Vawda in 2000 \cite{Mahomed}. They proved that if a curve is given as a solution to the central force problem $F(r)$ changing the radial distance $r$, and the angle $\varphi$ on its every point according to rule:
\begin{equation}\label{Mahomedtr}
r\to \frac{ar}{1-br},\qquad \varphi \to \frac{1}{k}\varphi,
\end{equation}
where $a,b$ are given constants, is equivalent to changing the force as follows:
\begin{equation}\label{Mohamedeq}
F(r)\to \frac{a^3}{(1-br)^2}F\zav{\frac{ar}{1-br}}+\frac{L^2}{m r^3}(1-k^2)-\frac{bL^2}{mr^2},
\end{equation}
where $m$ is the particle mass and $L$ its angular momentum. (Interested reader should also see \cite{Bell1},\cite{Bell2}.)

\subsubsection{Square root transform}
Dealing with planar curves only allow us to describe a point on a curve ${\bf x}$ not as a vector, but as a complex number:
$$
{\bf x}=(x,y),\qquad z:= x+\ii y.
$$
With this notation we can condense both pedal coordinates $p,p_c$ into a single quantity:
$$
\frac{z \dot {\bar z}}{\abs{\dot z}}=p_c+\ii p.
$$

We are now ready to introduce the very useful \textit{square root transform} $M_\frac12$ which maps a point $z$ to its square root $\sqrt{z}$:
$$
M_\frac12: \qquad \tilde z:=\sqrt{z}.
$$
This transform translates into pedal coordinates very easily:
\begin{align*}
\tilde r&:=\abs{\tilde z}=\abs{\sqrt{z}}=\sqrt{r}.\\
\dot{\tilde z}&=\frac12 z^{-\frac12}\dot z.\\
\abs{\dot{\tilde z}}&= \frac12 r^{-\frac12} \abs{\dot z}.\\
\tilde p_c+\ii\tilde p&:=\frac{\tilde z\overline{\dot{\tilde z}}}{\abs{\dot{\tilde z}}}=\frac{z^\frac12 \frac12 \bar z^{-\frac12}\dot z}{\frac12r^{-\frac12}\abs{\dot z}}= r^{-\frac12}\frac{z\dot{\bar z}}{\abs{\dot z}}=r^{-\frac12}(p_c+\ii p).
\end{align*}
This yields:
\begin{align}\label{roottr}
f(p,r,p_c)&=0, & &\stackrel{M_{\frac12} }{\longrightarrow} & f\zav{\frac{p}{\sqrt{r}}, \sqrt{r},\frac{p_c}{\sqrt{r}}}&=0.
\end{align}

This transform maps a central conic into a focal one:
$$
ab\zav{\frac{1}{p^2}-\frac{1}{r^2}}=-\frac{(r^2-a)(r^2-b)}{r^2}, \qquad \stackrel{M_{\frac12} }{\longrightarrow} \qquad ab\zav{\frac{1}{p^2}-\frac{1}{r^2}}=-\frac{(r-a)(r-b)}{r^2}.
$$
See (\ref{focalconic}), (\ref{centralconic}).

It also maps a circle into a central Cassini oval, which is the locus of points such that the product of distances from two foci is constant:
$$
\abs{z-a^2}=R, \qquad \stackrel{M_{\frac12} }{\longrightarrow} \qquad \abs{\tilde z^2-a^2}=R,\qquad \Rightarrow \qquad \abs{\tilde z-a}\abs{\tilde z+a}=R.
$$
This gives us a rather nice pedal equation for a central Cassini oval:
$$
2pR= r^2+R^2-\abs{a}^2\qquad \stackrel{M_{\frac12} }{\longrightarrow} \qquad
2p R= (r+ R^2-\abs{a}^2)\sqrt{r}.
$$

%As was mentioned, when the circle in question passes through origin (i.e. $\abs{a}=R$), its square root is a sinusoidal spiral called Lemniscate of Bernoulli:
%$$
%2Rp=r^2, \qquad \stackrel{M_{\frac12} }{\longrightarrow} \qquad  2Rp=r^{\frac32}.
%$$ 
%Writing it in the form:
%$$
%\frac{1}{p^2}=\frac{4R^2}{r^3},
%$$
%we may see that Lemniscate of Bernoulli is the zero energy solution for the central force problem, where the force varies inversely as a 4-th power of distance. 
%
%Starting not from a circle passing origin, i.e. a curve with the equation
%$$
%r=\sin \varphi
%$$
\subsubsection{Rotational frame of reference} 
The dynamical system (\ref{dynsys}) is perhaps the best way how to actually \textit{draw} a curve given in pedal coordinates -- solving the second order differential equation, in fact, avoids many numerical instabilities.

Let us therefore reverse the logic:

\begin{corollary}
A curve ${\bf x}$ given in pedal coordinates:
$$
 G(r)^2\zav{\frac{1}{p^2}-\frac{1}{r^2}}=F(r),
$$
where $F,G$ are smooth functions on $(0,\infty)$, solves the following Cauchy problem:
\begin{equation}\label{dynsys2}
{\bf \ddot x}=\frac{1}{2r}\zav{F(r)+\frac{G(r)^2}{r^2}}'{\bf x}+\frac{G'(r)}{r}{\bf {\dot x}}^\perp, \qquad \abs{\bf \dot x_0}^2=F(r_0)+\frac{G(r_0)^2}{r_0^2},\quad {\bf \dot x_0}\cdot {\bf x_0}^\perp =G(r_0). 
\end{equation}
\end{corollary}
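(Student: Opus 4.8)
The plan is to read this statement as the exact converse of Theorem \ref{Bl1}, realized by folding the two conserved constants $L$ and $c$ into the potential functions. First I would pass from functions of $r$ to functions of $s:=r^2$ by setting
$$
\tilde G(r^2):=G(r),\qquad \tilde F(r^2):=F(r)+\frac{G(r)^2}{r^2},
$$
and verify by the chain rule that the force law in (\ref{dynsys2}) is nothing but (\ref{dynsys}) for these $\tilde F,\tilde G$. Indeed, differentiating $\tilde G(r^2)=G(r)$ gives $2r\,\tilde G'(r^2)=G'(r)$, so $2\tilde G'(r^2)=G'(r)/r$, which is precisely the Lorentz coefficient; likewise $2r\,\tilde F'(r^2)=\zav{F(r)+G(r)^2/r^2}'$ reproduces the central coefficient $\tfrac{1}{2r}\zav{F(r)+G(r)^2/r^2}'$. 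Hence (\ref{dynsys2}) is an instance of (\ref{dynsys}).

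Next I would invoke Theorem \ref{Bl1} directly: its solution curve satisfies, in pedal coordinates,
$$
\frac{\zav{L-\tilde G(r^2)}^2}{p^2}=\tilde F(r^2)+c,
$$
with $L={\bf x}\cdot{\bf \dot x}^\perp+\tilde G(r^2)$ and $c=\abs{\bf \dot x}^2-\tilde F(r^2)$ the two conserved quantities. The whole point is now that the prescribed initial data are tuned to annihilate both constants. Using the sign convention ${\bf x}\cdot{\bf \dot x}^\perp=-{\bf \dot x}\cdot{\bf x}^\perp$, the condition ${\bf \dot x_0}\cdot{\bf x_0}^\perp=G(r_0)$ gives $L=-G(r_0)+\tilde G(r_0^2)=0$, while $\abs{\bf \dot x_0}^2=F(r_0)+G(r_0)^2/r_0^2=\tilde F(r_0^2)$ gives $c=0$.

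Substituting $L=0$ and $c=0$ collapses the pedal equation above to
$$
\frac{\tilde G(r^2)^2}{p^2}=\tilde F(r^2)\quad\Longrightarrow\quad \frac{G(r)^2}{p^2}=F(r)+\frac{G(r)^2}{r^2},
$$
which rearranges to $G(r)^2\zav{1/p^2-1/r^2}=F(r)$, the asserted equation. Since $F,G$ are smooth, the Cauchy problem (\ref{dynsys2}) has a unique solution, so this is precisely the curve the pedal equation describes (up to the usual ambiguity of pedal coordinates noted in the Remark). I expect the only real obstacle to be bookkeeping: correctly matching the coefficients across the $r\mapsto r^2$ change of variable through the chain rule, and tracking the sign in the $\perp$-product so that the stated initial data indeed force $L=c=0$ rather than some nonzero constants.
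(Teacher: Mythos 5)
Your proposal is correct and is precisely the argument the paper intends: the paper leaves the proof as an ``Exercise,'' explicitly framing the corollary as ``reversing the logic'' of Theorem \ref{Bl1}, and your reduction --- substituting $\tilde G(r^2)=G(r)$, $\tilde F(r^2)=F(r)+G(r)^2/r^2$ so that (\ref{dynsys2}) becomes an instance of (\ref{dynsys}), then checking via the chain rule and the sign convention ${\bf x}\cdot {\bf \dot x}^\perp=-{\bf \dot x}\cdot {\bf x}^\perp$ that the stated initial data force $L=c=0$ --- is exactly that inversion. The bookkeeping you flag (the factor $2r$ from the change of variable and the sign of the $\perp$-product) indeed works out as you claim, so there is no gap.
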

\begin{proof}
Exercise.
\end{proof}

For a dynamical system of the form (\ref{dynsys2}) there is a very natural transform that is done routinely in the field of celestial mechanics -- passing to a rotational frame of reference with an angular velocity $\omega$.

Specifically, we define the point ${\bf \tilde x}$ in RFR as follows: 
$$
R_\omega:\qquad {\bf \tilde x}:= \cos(\omega t) {\bf x}+\sin(\omega t){\bf x}^\perp.
$$ 

Equations (\ref{dynsys2}) are transformed as follows:
$$
{\bf \ddot{ \tilde x}}=\frac{1}{2r}\zav{F(r)+\frac{G(r)^2}{r^2}}'{\bf \tilde x}+\frac{G'(r)}{r}{\bf {\dot{ \tilde x}}}^\perp+\omega^2 {\bf \tilde x}-2\omega {\bf \dot{\tilde x}}^\perp,
$$
i.e. centrifugal $\omega {\bf \tilde x}$ and Coriolis $-2\omega {\bf \dot{\tilde x}}^\perp$ terms are added.

The transformed initial condition reads:
$$
\abs{{\bf \dot{\tilde x}_0}}^2=F(r_0)+2G(r_0)\omega +\omega^2 r_0^2+\frac{G(r_0)^2}{r_0^2},\qquad {\bf \dot{ x}_0}\cdot {\bf x_0}^\perp=G(r_0)+\omega r_0^2.
$$

The resulting Cauchy problem can be translated back into pedal coordinates with the following net effect:
\begin{equation}\label{Rw} 
 G(r)^2\zav{\frac{1}{p^2}-\frac{1}{r^2}}=F(r)\qquad \stackrel{R_\omega}{\longrightarrow}\qquad  \zav{G(r)+\omega r^2}^2\zav{\frac{1}{p^2}-\frac{1}{r^2}}=F(r).
 \end{equation}

\begin{example}
Starting with a central ellipse (\ref{centralconic}) with axis $a,b$:
$$
a^2b^2\zav{\frac{1}{p^2}-\frac{1}{r^2}}=\frac{(a^2-r^2)(r^2-b^2)}{r^2},
$$
we obtain in RFR:
\begin{equation}\label{epicycloids}
\stackrel{R_\omega }{\longrightarrow}\qquad (ab+\omega r^2)^2\zav{\frac{1}{p^2}-\frac{1}{r^2}}=\frac{(a^2-r^2)(r^2-b^2)}{r^2}.
\end{equation}
The solutions correspond exactly to \textit{Epicycloids}, which are curves traced by a point on a rotating circle whose center is also rotating on a different circle (whose center is at origin).  

This can be easily verified by solving the corresponding dynamical systems:
$$
{\bf \ddot x}=-{\bf x}\qquad \stackrel{R_\omega}{\longrightarrow}\qquad {\bf \ddot x}= (\omega ^2-1){\bf x}-2\omega {\bf \dot x}^\perp.
$$ 
\end{example}
\begin{example}
Consider a dynamical system:
$$
{\bf \ddot x}=-\frac{M}{r^3}{\bf x}+\frac{F}{r}{\bf x}-\omega^2 {\bf x},
$$
corresponding to the evolution of a test particle that is not only under the gravitational influence of a point mass (the $-M/r^3\,{\bf x}$ term), but also subject to gravitational attraction of a spherically symmetric halo of dark matter ($-\omega^2 {\bf x}$), and in the presence of dark energy (i.e. constant outward repulsing force $+F/r\,{\bf x}$ ).
  
In the pedal coordinates this takes the form
$$
\frac{L^2}{p^2}=\frac{2M}{r}+2Fr-\omega^2 r^2+c.
$$
Or
$$
L^2\zav{\frac{1}{p^2}-\frac{1}{r^2}}=\frac{-\omega^2 r^4+2Fr^3+cr^2+2M r-L^2}{r^2}.
$$
The polynomial 
$$
h(r):=-\omega^2 r^4+2Fr^3+cr^2+2M r-L^2,
$$
has at least two positive real roots $r_0,r_1$, we may even assume $r_0\leq r\leq r_1$ -- otherwise the expression above does not represent a curve. There are also two additional roots $r_2,r_3$ that are not necessarily positive nor real (but its product $r_2r_3$ must be real).

Hence the pedal equation can also be written:
\begin{equation}\label{DKPr}
r_0r_1r_2r_3\zav{\frac{1}{p^2}-\frac{1}{r^2}}=\frac{(r-r_0)(r_1-r)(r-r_2)(r-r_3)}{r^2},\qquad r_0\leq r\leq r_1,
\end{equation}
where
\begin{align*}
2F&=\omega^2\zav{r_0+r_1+r_2+r_3},\\
c&=-\omega^2\zav{r_0r_1+r_0r_2+r_0r_3+r_1r_2+r_1r_3+r_2r_3},\\
2M&=\omega^2\zav{r_0r_1r_2+r_0r_1r_3+r_0r_2r_3+r_1r_2r_3},\\
L^2&=\omega^2 r_0r_1r_2r_3.\\
\end{align*}

In \cite{Blaschke6} it was shown that provided $r_0r_3=r_1r_2$ (which happens when $L^2F^2=M^2\omega^2$) going to the RFR with angular speed $\omega$ (i.e. eliminating dark matter)  the resulting curve is precisely a Cartesian oval. 

Recall that the Cartesian oval is the locus of points such that a linear combination of distances from two foci is constant. 
Specifically, it is a curve given by:
$$
\abs{\bf x}+\alpha\abs{\bf x-a}=C.
$$
And the pedal equation:
$$
\zav{r_0r_3-r^2}^2\zav{\frac{1}{p^2}-\frac{1}{r^2}}=- \frac{(r-r_0)(r-r_1)(r-r_2)(r-r_3)}{r^2}, 
$$
where 
$$ r_0=\frac{C-\alpha\abs{\bf a}}{1+\alpha}, \qquad r_1=\frac{C-\alpha\abs{\bf a}}{1-\alpha},\qquad r_2=\frac{C+\alpha\abs{\bf a}}{1+\alpha}, \qquad r_3=\frac{C+\alpha\abs{\bf a}}{1-\alpha}.
$$
Indeed:
$$
r_0r_3=r_1r_2.
$$
\end{example}
\begin{example}
The transform $R_\omega$ is highly dependent on the form of the pedal equation in question. For instance, take once again a central ellipse
$$
r_0^2r_1^2\zav{\frac{1}{p^2}-\frac{1}{r^2}}=\frac{(r^2-r_0^2)(r_1^2-r^2)}{r^2},
$$
which corresponds to solutions of Hooke's law
\begin{equation}\label{ds1}
{\bf \ddot x}=-{\bf x}.
\end{equation}

By a simple algebraic operation like multiplying both sides by a factor $(b r^2+a)^2$
$$
r_0^2r_1^2(b r^2+a)^2\zav{\frac{1}{p^2}-\frac{1}{r^2}}=\frac{(r^2-r_0^2)(r_1^2-r^2)(br^2+a)^2}{r^2},
$$
we obtain a different dynamical system:
\begin{equation}\label{ds2}
{\bf \ddot y}=-\zav{3b^2 r^4+2r^2\zav{2ba+b^2(r_0^2+r_1^2)}-a^2+2ba(r_0^2+r_1^2)}{\bf y}+2br_0r_1{\bf \dot y}^\perp.
\end{equation}

Equations (\ref{ds1}) and (\ref{ds2}) are, however, not equivalent. Ellipses are solutions of (\ref{ds2}) only for a very particular set of initial condition. 

But even if some orbits of  (\ref{ds1}), (\ref{ds2}) are the same the corresponding solutions still differ by a parametrization. The ellipse ${\bf x}$ is parameterized so that equal areas are swept by equal time (Kepler's law),
but ellipse ${\bf y}$ is not -- in fact, its parametrization is so that equal areas are swept in equal time on a different coaxial ellipse. 
\subsubsection{Ellipse in Generalized rotation frame of reference (GRFR)}
\begin{center}
\includegraphics[scale=0.3,trim=5cm 1cm 10cm 1.5cm,clip]{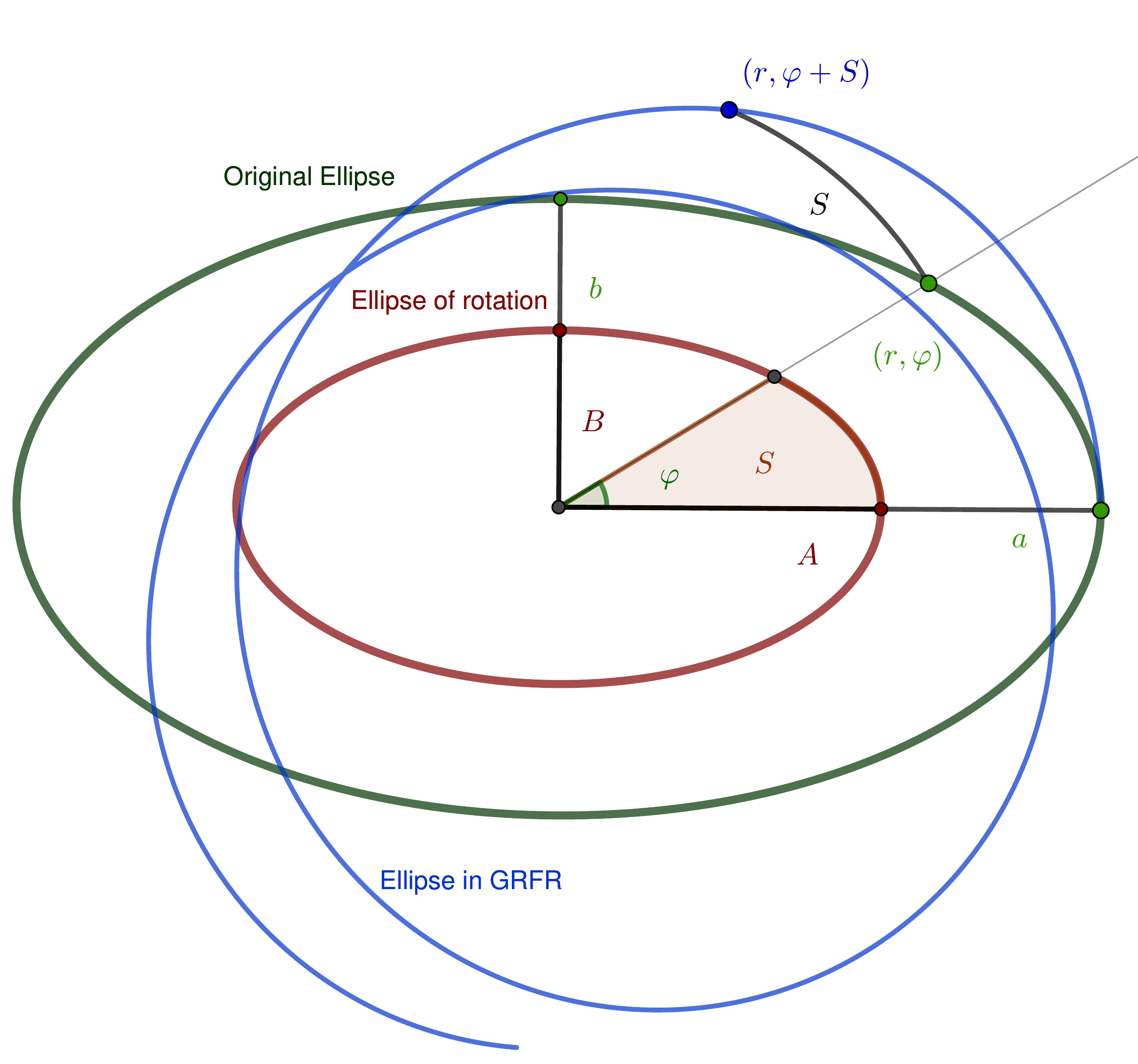}
\end{center}

Consider a central ellipse with semi-axes $A,B$ (red in the picture). A point on it is given by $x=A\cos t$, $y=B\sin t$. The area $S$ swept by a radial vector up to that point is
$$
S=\frac12\inte{0}{t} (-y\dot x+x\dot y)\dd t=\frac12\inte{0}{t}\zav{BA\sin^2 t+AB\cos^2 t}\dd t=\frac{AB t}{2}.
$$
Consider now a second coaxial ellipse (green in the picture) with semi-axes $a,b$. We pick a point on it $(r,\varphi)$ in polar coordinates with the same polar angle $\varphi$ as has the point $(x,y)$, i.e. 
$$
\tan \varphi=\frac{y}{x}=\frac{B}{A}\tan t,\qquad \Rightarrow\qquad \varphi=\arctan\zav{\frac{B}{A}\tan t}.
$$ 
Thus
$$
\cos \varphi=\frac{A\cos t}{\sqrt{A^2\cos^2 t+B^2\sin^2 t}},\qquad \sin \varphi=\frac{B\sin t}{\sqrt{A^2\cos^2 t+B^2\sin^2 t}}.
$$

The distance $r$ can be computed from the equation of central ellipse as follows:
$$
r=\frac{ab}{\sqrt{b^2\cos^2 \varphi+a^2\sin^2 \varphi}}=\frac{ab\sqrt{C^2\cos^2 t+B^2}}{\sqrt{D\cos^2 t+a^2 B^2}},
$$
where 
$$
C^2:=A^2-B^2,\qquad c^2:=a^2-b^2, \qquad D:=b^2A^2-a^2B^2.
$$
We now rotate the point $(r,\varphi)$ around origin by the amount of $\pm S$ (depending whether clockwise or anticlockwise). The new point (on a blue curve in the picture) has coordinates
$$
\tilde r:=r=\frac{ab\sqrt{C^2\cos^2 t+B^2}}{\sqrt{D^2\cos^2 t+a^2 B^2}},\qquad 
\tilde \varphi:=\varphi\pm S=\arctan\zav{\frac{B}{A}\tan t}\pm \frac12 t AB.
$$

Now $$
\zav{\frac{\partial \tilde r}{\partial\tilde \varphi}}^2=\zav{\frac{\dot {\tilde r}}{\dot{\tilde \varphi}}}^2=
\frac{(\tilde r^2-b^2)(a^2-\tilde r^2)\zav{C^2a^2b^2-D^2 \tilde r^2}^2}{(ab)^2\zav{\zav{\pm\frac12A^2B^2c^2-D}\tilde r^2+(ab)^2C^2}^2}.
$$

But, generally,
$$
\zav{\frac{ \partial r}{r \partial\varphi}}^2=\zav{\frac{r \dot r}{r^2 \dot\varphi}}^2=\zav{\frac{{\bf  x}\cdot {\bf \dot x}}{{\bf \dot x}\cdot {\bf x}^\perp}}^2=\zav{\frac{p_c}{p}}^2=r^2\zav{\frac{1}{p^2}-\frac{1}{r^2}}.
$$

Thus the resulting curve -- \textit{ellipse in GRFR} -- is given in pedal coordinates as follows: 
\begin{equation}\label{eGRFR}
(ab)^2\zav{\zav{ \pm\frac12(AB)^2c^2-D}r^2+C^2(ab)^2}^2\zav{\frac{1}{p^2}-\frac{1}{r^2}}=\frac{(a^2-r^2)(r^2-b^2)\zav{C^2(ab)^2-Dr^2}^2}{r^2},
\end{equation}
where the linear eccentricities $c,C$ and the combined eccentricity $D$ are given:
$$
C^2:=A^2-B^2,\qquad c^2:=a^2-b^2, \qquad D:=b^2A^2-a^2B^2.
$$

In other words, the ellipse in GRFR is a central ellipse parameterized so that equal areas are swept in equal times on different co-centric and co-axial ellipse observed form the rotation frame of reference.

Notice that applying the transform $R_\omega$ with angular speed $\omega=\mp\frac12 (AB)^2 c^2$ we obtain the original ellipse:
$$
(\ref{eGRFR})\qquad \stackrel{R_{\omega}}{\longrightarrow}\qquad 
(ab)^2\zav{\frac{1}{p^2}-\frac{1}{r^2}}=\frac{(a^2-r^2)(r^2-b^2)}{r^2}.
$$ 
So this curve looks complicated but in the correct rotational frame it is just a central ellipse.

\begin{figure}[h] \label{galeryeGRFR}
\caption{Ellipses in GRFR with $a=3,b=2,A=2$ and $B$ is equal in turn: $-1,1,3$ top row and $-\frac13,-\frac35,\frac52$ bottom row. The minus sign indicates negative, clockwise orientation.}
\centering %left bottom right top
\includegraphics[scale=0.9,trim=2cm 5cm 1cm 11cm,clip]{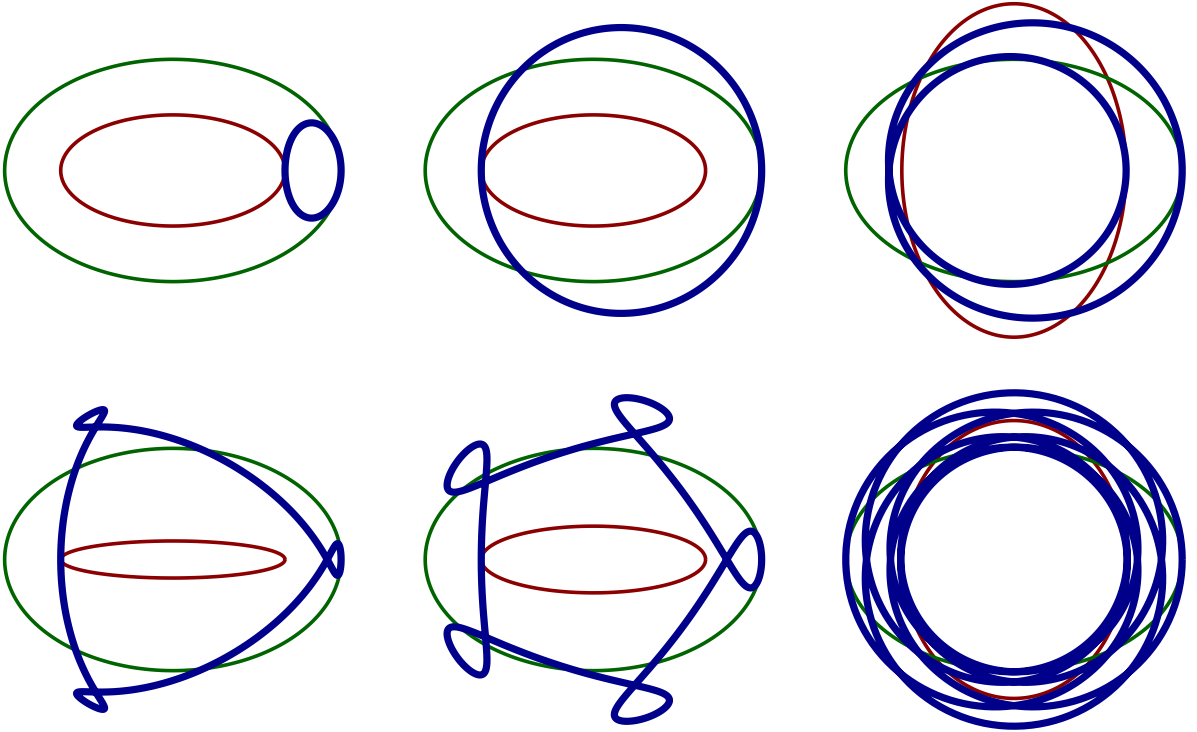}
\end{figure}
\end{example}

\section{Zero angular momentum case}\label{Sec3}
Assuming ${\bf L}=0$, our solution of free double linkage (\ref{Pedeq}) can be brought into a very simple form.
\begin{corollary}\label{Cor1} The curve traced by the node ${\bf x}$ of a coplanar free double linkage with zero total angular momentum ${\bf L}=0$ as viewed from the center of mass is given by:
\begin{equation}\label{L0casepedgen}
a^2\zav{r^2+r_0r_1}^2\zav{\frac{1}{p^2}-\frac{1}{r^2}}=\frac{(r^2-r_0^2)(r_1^2-r^2)(br^2+a)^2}{r^2},
\end{equation} 
where
\begin{align*}
a&=M m_y(r_1-r_0)^2, & b&=4m_xm_z,\\
r_1&=\frac{l_1(m_z+m_y)+ l_2m_z}{M}, & r_0&=\frac{l_1(m_z+m_y)- l_2m_z}{M}.\\
\end{align*}
\end{corollary}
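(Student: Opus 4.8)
The plan is to specialize the general pedal equation (\ref{Pedeq}) to the case $\abs{\bf L}=0$ and then reduce the resulting relation between $p$ and $p_c$ to the stated form using the identity $p^2+p_c^2=r^2$. Setting $\abs{\bf L}=0$ in (\ref{Pedeq}) annihilates the final term $\abs{\bf L}^2 P_2(r^2)$ outright, and also the first summand of the leading factor, leaving
$$
-2L_1\zav{r^2}K r^2\zav{L_1\zav{r^2}\frac{p\beta}{p_c}+L_2\zav{r^2}}^2=0.
$$
Since $K$ is the (positive) total kinetic energy of a genuinely moving system, $r>0$, and $L_1(r^2)=\frac{M}{4\tilde m_z m_z}\zav{b+a/r^2}>0$ because $a,b>0$, the prefactor $-2L_1 K r^2$ never vanishes, so the squared bracket must vanish. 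This yields the single scalar condition $L_1(r^2)\,\frac{p\beta}{p_c}+L_2(r^2)=0$, i.e. $\frac{p\beta}{p_c}=-L_2(r^2)/L_1(r^2)$.

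Next I would compute the ratio $L_2/L_1$ directly from the definitions in Theorem \ref{Th1}. Dividing
$$
L_2\zav{r^2}=\frac{aM^2m_y}{8m_z^2\tilde m_z^2 r^2}\zav{1+\frac{r_0r_1}{r^2}}
$$
by $L_1(r^2)$ and clearing the inner denominators by a factor $r^2$ gives
$$
\frac{L_2\zav{r^2}}{L_1\zav{r^2}}=\frac{aMm_y}{2m_z\tilde m_z r^2}\cdot\frac{r^2+r_0r_1}{br^2+a}.
$$
Substituting the explicit $\beta=\frac{m_y M}{2m_z\tilde m_z r^2}\sqrt{\zav{r_1^2-r^2}\zav{r^2-r_0^2}}$ into the condition and cancelling the common factor $\frac{m_yM}{2m_z\tilde m_z r^2}$ present on both sides collapses the relation to
$$
\frac{p}{p_c}\sqrt{\zav{r_1^2-r^2}\zav{r^2-r_0^2}}=-\frac{a\zav{r^2+r_0r_1}}{br^2+a}.
$$

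Finally I would square this relation (discarding the sign, which pedal coordinates do not record) and convert to pedal form using $p_c^2=r^2-p^2$, so that $p_c^2/p^2=r^2\zav{\frac{1}{p^2}-\frac{1}{r^2}}$. Taking reciprocals of the squared identity turns $p^2/p_c^2$ into $p_c^2/p^2$ and produces
$$
r^2\zav{\frac{1}{p^2}-\frac{1}{r^2}}=\frac{\zav{br^2+a}^2\zav{r_1^2-r^2}\zav{r^2-r_0^2}}{a^2\zav{r^2+r_0r_1}^2},
$$
which is exactly (\ref{L0casepedgen}) after clearing denominators. The whole argument is essentially bookkeeping; the only point requiring genuine care is the cancellation of the common prefactor shared by $\beta$ and by $L_2/L_1$, together with verifying that the leftover constants recombine into precisely $a^2(r^2+r_0r_1)^2$ and $(br^2+a)^2$ with no stray mass factors surviving. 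I expect this constant-tracking to be the main, though entirely routine, obstacle.
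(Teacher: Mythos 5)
Your derivation is correct and is exactly the computation the paper leaves as ``Trivial'': set $\abs{\bf L}=0$ in (\ref{Pedeq}), observe that the nonvanishing prefactor $-2L_1Kr^2$ forces the bracket $L_1\frac{p\beta}{p_c}+L_2$ to vanish, and square the resulting relation using $p_c^2=r^2-p^2$; the constants do recombine into $a^2(r^2+r_0r_1)^2$ and $(br^2+a)^2$ as you anticipate. No gaps.
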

\begin{proof}Trivial.
\end{proof}
Let us discuss few interesting special cases.

\subsection{Central ellipse} 
Clearly, setting $a=br_0r_1$ we can eliminate the term $(br^2+a)^2$ from both sides of (\ref{L0casepedgen})  and obtain:
\begin{equation}\label{L0centralellipse}
\zav{r_0r_1}^2\zav{\frac{1}{p^2}-\frac{1}{r^2}}=\frac{(r^2-r_0^2)(r_1^2-r^2)}{r^2},\qquad a=br_0r_1,
\end{equation}  
i.e. central ellipse with semi axis $r_0,r_1$ (see (\ref{centralconic})).

However, let us say that we cannot recognize this curve to be an ellipse. In that case (which happens all the time) we can use some transforms introduced in Section \ref{Pedsec} to try to simplify the result. For instance, using the complex square root $M_\frac12$ (see (\ref{roottr})):
$$
\zav{r_0r_1}^2\zav{\frac{1}{p^2}-\frac{1}{r^2}}=\frac{(r^2-r_0^2)(r_1^2-r^2)}{r^2} \qquad \stackrel{M_{\frac12} }{\longrightarrow} \qquad \zav{r_0r_1}^2\zav{\frac{1}{p^2}-\frac{1}{r^2}}=\frac{(r-r_0^2)(r_1^2-r)}{r^2},
$$
we obtain the equation for the focal ellipse (see (\ref{focalconic})). But say that even this we cannot identify. Using dual parallel transform $E^*_{\alpha}$ (\ref{dualtr}) we get: 
$$
\zav{r_0r_1}^2\zav{\frac{1}{p^2}-\frac{1}{r^2}}=\frac{(r-r_0^2)(r_1^2-r)}{r^2}\qquad  \stackrel{E^*_{\alpha} }{\longrightarrow} \qquad \frac{r_0^2r_1^2}{p^2}=\frac{(r_1^2-r_0^2)^2}{4r_0^2r_1^2}, \qquad \alpha:=-\frac{r_0^2+r_1^2}{2r_0^2r_1^2},
$$ 
which is clearly a line. 

Taking the inverse transforms we can thus conclude that our solution is a complex square of a dual parallel of a line -- which is perhaps the most cumbersome description of a central ellipse there is. But it contains information about the construction.

As we will see, the exact same transform, that is a square root $M_\frac12$ followed by a dual parallel transform $E^*_\alpha$ for suitable $\alpha$ will always produce something interesting. Perhaps the transform $E^*_\alpha M_\frac12$ is somewhat natural for double linkage.
\subsection{Kepler problem in general relativity}
Take, for instance, the case $r_0=0$, meaning that the point ${\bf x}$ is allowed to pass through the center of mass of the linkage.

Putting $r_0=0$ into (\ref{L0casepedgen}) we obtain
\begin{equation}\label{r0case}
a^2\zav{\frac{1}{p^2}-\frac{1}{r^2}}=\frac{(r_1^2-r^2)(br^2+a)^2}{r^4},\qquad r_0=0.
\end{equation}
Using the square root transform $M_{\frac12}$ we get
\begin{equation}\label{todarkkepler}
(\ref{r0case})\qquad  \stackrel{M_{\frac12}}{\longrightarrow}\qquad a^2\zav{\frac{1}{p^2}-\frac{1}{r^2}}=\frac{(r_1^2-r)(br+a)^2}{r^3}.
\end{equation} 

This equation is actually of the same form as the so-called ``Kepler problem in General relativity'', i.e. the relativistic correction to a classical Kepler problem. 

Kepler problem in GR  is the problem of orbits (or geodesics) around a non-rotating compact body described by the Schwarzschild solution of the Einstein equations of General relativity \cite{schwarzschild}:
\begin{equation}\label{KPGR}
{r^\prime_\varphi}^2=\frac{r^4}{b_s^2}-\zav{1-\frac{r_s}{r}}\zav{\frac{r^4}{a_s^2}+r^2},
\end{equation}
where
$$
r_s:=\frac{2G M}{c^2}, \qquad a_s:=\frac{L}{GM c}, \qquad b_s:=\frac{cL}{E}.
$$
The quantity $r_s$ is the Schwarzschild radius, $L$ is  the specific angular momentum and  $E$ is the specific energy of a test particle.

As shown in \cite{Blaschke6}, the orbits are given in pedal coordinates as follows:
\begin{equation}\label{KPGRp}
(\rho_0+\rho_1+\rho_2)\zav{\frac{1}{p^2}-\frac{1}{r^2}}=\frac{(r- \rho_0)(r- \rho_1)(r- \rho_2)}{r^3},
\end{equation}
where
\begin{align*}
\rho_0\rho_1\rho_2&=\frac{r_s a^2_sb^2_s}{b_s^2-a_s^2}.\\
\rho_0\rho_1+\rho_0\rho_2+\rho_1\rho_2&=\frac{a^2_sb^2_s}{b_s^2-a^2_s}.\\
\rho_0+\rho_1+\rho_2&=\frac{r_sb_s^2}{b_s^2-a_s^2}.
\end{align*}

Choosing $\rho_0=r_1^2$, $\rho_1=\rho_2=-a/b$ we get
$$
b(2a-b r_1^2)\zav{\frac{1}{p^2}-\frac{1}{r^2}}=\frac{(r_1^2- r)(br+a)^2}{r^3},
$$
almost a match with (\ref{KPGRp}). To get precise equality we must perform on (\ref{todarkkepler}) the dual parallel transform $E^\star_{\alpha}$ with $\alpha=2b/(3a)$. 
  
The two problems, i.e. $r_0=0$ case of zero angular momentum free double linkage (\ref{r0case}) and Kepler problem in general relativity (\ref{KPGRp}) comes obviously from very different origins. And yet, there is a connection. This connection is almost certainly just a mathematical coincidence, but it is kind of appealing to think that a movement of a double linkage is somehow connected with the movement inside a Black Hole. (Remember $r_0=0$.) 
\subsection{Dark Kepler problem} There is a general way how to get rid of the magnetic term $(r+r_0r_1)^2$ in (\ref{L0casepedgen}) -- namely using the transform $E^\star_{\alpha} M_{\frac12}$.

First the square root transform:
$$
(\ref{L0casepedgen})\qquad \stackrel{M_{\frac12}}{\longrightarrow}\qquad a^2\zav{r+r_0r_1}^2\zav{\frac{1}{p^2}-\frac{1}{r^2}}=\frac{(r-r_0^2)(r_1^2-r)(br+a)^2}{r^2}.
$$
Rewrite the result as follows:
$$
a^2\zav{1+\frac{r_0r_1}{r}}^2\zav{\frac{1}{p^2}-\frac{1}{r^2}}=\zav{1-\frac{r_0^2}{r}}\zav{\frac{r_1^2}{r}-1}\zav{b+\frac{a}{r}}^2,
$$
we can easily apply a dual parallel transform: 
$$
\stackrel{E^\star_{-\frac{1}{r_0r_1}}}{\longrightarrow}\qquad 
a^2\zav{\frac{r_0r_1}{r}}^2\zav{\frac{1}{p^2}-\frac{1}{r^2}}=\zav{1-\frac{r_0^2}{r}+\frac{r_0}{r_1}}\zav{\frac{r_1^2}{r}-\frac{r_1}{r_0}-1}\zav{b+\frac{a}{r}-\frac{a}{r_0r_1}}^2.
$$
Tidying this up we get
\begin{equation}\label{DKP}
a^2 (r_0r_1)^5\zav{\frac{1}{p^2}-\frac{1}{r^2}}=\frac{\zav{(r_0+r_1)r-r_1r_0^2}\zav{r_1^2r_0-(r_0+r_1)r}\zav{(br_0r_1-a)r+ar_0r_1}^2}{r^2},
\end{equation}
which is exactly the Dark Kepler problem (\ref{DKPr}). Or rather its special case with a double root. Thus we have discovered a particular solution.
 
\subsection{General construction}

\begin{corollary}\label{Cor3} Provided $r_0\not=0$, curves given by (\ref{L0casepedgen}) are ellipses in GRFR with  $r_1,r_0$ being the semi-axes of the original ellipse and with the ellipse of rotation given by
$$ 
A=\sqrt{\abs{\frac{2r_1}{r_0}\frac{{a-br_0r_1}}{br_0^2+a}}},\qquad B=\sqrt{\abs{\frac{2r_0}{r_1}\frac{{a-br_0r_1}}{br_1^2+a}}}.
$$
\end{corollary}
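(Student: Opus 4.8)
The plan is to recognize equation (\ref{L0casepedgen}) as an instance of the ellipse-in-GRFR pedal equation (\ref{eGRFR}) and then read off the matching parameters. First I would take the general GRFR equation
\begin{equation*}
(ab)^2\zav{\zav{\pm\tfrac12(AB)^2c^2-D}r^2+C^2(ab)^2}^2\zav{\frac{1}{p^2}-\frac{1}{r^2}}=\frac{(a^2-r^2)(r^2-b^2)\zav{C^2(ab)^2-Dr^2}^2}{r^2},
\end{equation*}
and the double-linkage equation (\ref{L0casepedgen}), and compare them side by side. The zero-of-$r$ structure already forces the identification of the semi-axes of the \emph{original} ellipse: the factor $(r^2-r_0^2)(r_1^2-r^2)$ on the right of (\ref{L0casepedgen}) must coincide with $(a^2-r^2)(r^2-b^2)$ in (\ref{eGRFR}), which (up to the overall normalization absorbed into the bracketed polynomials) gives $a=r_1$, $b=r_0$. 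This is why the statement asserts that $r_1,r_0$ are the semi-axes of the original ellipse.

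Having fixed $a=r_1$, $b=r_0$, the remaining content is to match the two quadratic-in-$r^2$ polynomials appearing squared on each side. On the left of (\ref{L0casepedgen}) we have $a^2(r^2+r_0r_1)^2$, and on the right the extra factor $(br^2+a)^2$; in (\ref{eGRFR}) the corresponding squared polynomials are $(ab)^2(\pm\tfrac12(AB)^2c^2-D)r^2+C^2(ab)^2)^2$ and $(C^2(ab)^2-Dr^2)^2$. So I would impose the two proportionalities
\begin{equation*}
\text{(linkage)}\ (br^2+a)\ \propto\ C^2(ab)^2-Dr^2,\qquad \text{(linkage)}\ (r^2+r_0r_1)\ \propto\ \zav{\pm\tfrac12(AB)^2c^2-D}r^2+C^2(ab)^2,
\end{equation*}
where now $a=r_1$, $b=r_0$ on the linkage side and $c^2=A^2-B^2$, $C^2=A^2-B^2$ on the GRFR side. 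Comparing the ratio of constant term to $r^2$-coefficient in each bracket produces a small system of algebraic equations relating $A,B$ to $r_0,r_1,a,b$ (here $a,b$ denoting the linkage constants $Mm_y(r_1-r_0)^2$ and $4m_xm_z$). Solving for $A^2$ and $B^2$ should yield precisely
$$
A^2=\abs{\frac{2r_1}{r_0}\frac{a-br_0r_1}{br_0^2+a}},\qquad B^2=\abs{\frac{2r_0}{r_1}\frac{a-br_0r_1}{br_1^2+a}},
$$
with the absolute values accounting for the orientation ($\pm$) ambiguity of the rotation in (\ref{eGRFR}).

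The main obstacle I anticipate is bookkeeping rather than conceptual: the symbols $a,b,c,C,D$ are overloaded between the two equations (the linkage constants $a,b$ versus the GRFR semi-axes $a,b$ and eccentricities $c,C,D$), so I would first rename carefully to avoid collision, perhaps writing the GRFR semi-axes as $(\mathfrak a,\mathfrak b)$ during the comparison. The genuinely delicate point is that (\ref{eGRFR}) is not literally normalized the same way as (\ref{L0casepedgen}): both sides carry a common polynomial factor that must be reconciled, and one has to verify that a single choice of $A,B$ simultaneously matches \emph{both} squared brackets (the ``$G$-polynomial'' on the left and the ``eccentricity'' factor on the right) and not just one of them. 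Checking this consistency — that the two independent matching conditions are compatible and pin down $A,B$ uniquely up to sign — is the crux; once it holds, the identification of the curve as an ellipse in GRFR, with rotation ellipse $(A,B)$ as claimed, follows immediately, and the restriction $r_0\neq 0$ is exactly what keeps the formulas for $A,B$ well defined.
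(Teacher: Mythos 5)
Your approach is essentially the paper's: the authors multiply (\ref{L0casepedgen}) by a free factor $\lambda^2$, set the GRFR original-ellipse semi-axes to $r_1,r_0$, and match coefficients of the two squared brackets, obtaining three equations in $(A,B,\lambda)$ which the stated $A,B$ together with $\lambda=\frac{2r_0r_1(r_1^2-r_0^2)(a-br_0r_1)}{(a+r_1^2b)(a+r_0^2b)}$ are verified to solve -- exactly your matching-up-to-normalization scheme, including the $\pm$ sign fixing the orientation. One small slip to correct before computing: you wrote $c^2=A^2-B^2$ on the GRFR side, but $c^2=a^2-b^2=r_1^2-r_0^2$ refers to the original ellipse, while $C^2=A^2-B^2$ belongs to the rotation ellipse.
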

\begin{proof}
Multiplying the equation (\ref{L0casepedgen}) by some nonzero number $\lambda^2$ and comparing with (\ref{eGRFR}) -- where we set $a=r_1,b=r_0$ we obtain three equations in three unknowns $(A,B,\lambda)$:
\begin{align*}
\lambda a&=r_0r_1\zav{\pm\frac12 (AB)^2(r_1^2-r_0^2)-D},\\
\lambda a&=(A^2-B^2)(r_0r_1)^2,\\
\lambda b&=-D=r_1^2 B^2-r_0^2 A^2.
\end{align*}
The reader can easily verify that the desired $A,B$ solves this system together with 
$$
\lambda=\frac{2r_0r_1(r_1^2-r_0^2)\zav{a-br_0r_1}}{(a+r_1^2 b)(a+r_0^2 b)}.
$$
The $\pm$ sign in the first equation must be chosen so that a solution exists and it indicates clockwise (-) or anti-clockwise (+) orientation of the rotation.
\end{proof}
Note that even though solutions of ${\bf L}=0$ double linkage are ellipses in GRFR, the reverse is not true. Not all ellipses in GRFR solve ${\bf L}=0$ double linkage. The following connection between the original ellipse and the ellipse of rotation must hold:
$$
\frac{r_1-r_0}{2}=\frac{r_1^2}{r_0A^2}-\frac{r_0^2}{r_1 B^2}.
$$

\section{Conclusion}
As we saw, the movement of a free double linkage is far from trivial, even if we restrict ourselves to the case of zero total angular momentum ${\bf L}=0$.
 
We have demonstrated that solutions can be linked to both the Kepler problem in general relativity and to the Dark Kepler problem. Furthermore, we have fully described the resulting orbits as ellipses in generalized rotation frame of reference. 

The case  ${\bf L}\not=0$ for which we were able to provide algebraic closed solution in pedal coordinates (\ref{Pedeq}) is more challenging to fully describe and further research is needed. 

\section{Acknowledgment}
The authors thank the anonymous referee and Evariste Boj for careful reading of the manuscript and suggesting numerous improvements.
   
F. Blaschke and M. Blaschke would like to express their acknowledgment for the institutional support of the Research Centre for Theoretical Physics and Astrophysics, Institute of Physics, Silesian University in Opava.

P. Blaschke was supported by the GA\v CR grant no. 21-27941S and RVO funding 47813059.

\section{Data availability}
Data sharing is not applicable to this article as no new data were created or analyzed in this
study.

\end{document}